\documentclass[conference]{IEEEtran}
\IEEEoverridecommandlockouts


\usepackage{microtype}


\usepackage{mathrsfs}
\usepackage{amsfonts}
\usepackage{amsmath}
\usepackage{amsthm}
\usepackage{amssymb}
\usepackage{dsfont}
\usepackage{esint}
\usepackage{bbm}
\usepackage{bm}
\usepackage{mathtools}


\usepackage{enumerate}
\usepackage[shortlabels]{enumitem}
\usepackage{framed}
\usepackage{csquotes}


\usepackage{float}
\usepackage{tabularx}
\usepackage{xcolor}
\usepackage{multicol}
\usepackage{subcaption}
\usepackage{caption}
\usepackage{cite}

\usepackage{hyperref}

\usepackage{listings}
\lstset{
  basicstyle=\ttfamily,
  columns=fullflexible,
  frame=single,
  breaklines=true,
  postbreak=\mbox{\textcolor{red}{$\hookrightarrow$}\space},
}

\allowdisplaybreaks

\usepackage{titlesec}
\usepackage[many]{tcolorbox}
\usepackage[dvipsnames]{xcolor}

\newtheorem{theorem}{Theorem}

\newtheorem{lemma}[theorem]{Lemma}
\newtheorem{remark}[theorem]{Remark}
\newtheorem{proposition}[theorem]{Proposition}
\newtheorem{definition}[theorem]{Definition}

\tcbuselibrary{skins, breakable}

\newcommand{\RR}{\mathbb{R}}

\newcommand{\ZZ}{\mathbb{Z}} 
\newcommand{\ii}{\mathbbm{1}} 
\newcommand{\ket}[1]{|#1\rangle} 
\newcommand{\bra}[1]{\langle #1|} 
\newcommand{\op}[2]{|#1\rangle\langle #2|}

\newcommand{\Tr}{\operatorname{Tr}} 
 
\newcommand{\mc}[1]{\mathcal{#1}} 
\newcommand{\fall}{\textrm{ }\forall\textrm{ }} 
\newcommand{\norm}[1]{\left\lVert#1\right\rVert} 
\newcommand{\abs}[1]{\left|#1\right|} 
\newcommand{\ra}{\rightarrow}

\newcommand{\ep}{\varepsilon} 
\newcommand{\bp}[1]{\!\left(#1\right)} 
\newcommand{\bb}[1]{\!\left[#1\right]}
\newcommand{\bc}[1]{\!\left\{#1\right\}} 
 
\newcommand{\hilb}{\mc{H}}

\def\BibTeX{{\rm B\kern-.05em{\sc i\kern-.025em b}\kern-.08em
    T\kern-.1667em\lower.7ex\hbox{E}\kern-.125emX}}

\allowdisplaybreaks

\begin{document}

\thispagestyle{plain}
\pagestyle{plain}

\title{Genuine multipartite Rains entanglement\\

\thanks{}
}

\author{%
  \IEEEauthorblockN{Hailey S. Murray\IEEEauthorrefmark{1}\IEEEauthorrefmark{2}, Sagnik Bhattacharya\IEEEauthorrefmark{3}, M. Cerezo\IEEEauthorrefmark{4}\IEEEauthorrefmark{2}, Liuke Lyu\IEEEauthorrefmark{5} and Mark M.~Wilde\IEEEauthorrefmark{6}\IEEEauthorrefmark{2}}
  \IEEEauthorblockA{\IEEEauthorrefmark{1} School of Applied and Engineering Physics, Cornell University,\\Ithaca, NY 14850, USA. Email: hm649@cornell.edu}
  \IEEEauthorblockA{\IEEEauthorrefmark{2}Quantum Science Center, Oak Ridge, TN 37931, USA.}
  \IEEEauthorblockA{\IEEEauthorrefmark{3} Department of Electrical and Computer Engineering, University of Maryland, College Park, MD 20742, USA.}
  \IEEEauthorblockA{\IEEEauthorrefmark{4}Information Sciences, Los Alamos National Laboratory, NM 87545, USA.}
  \IEEEauthorblockA{\IEEEauthorrefmark{5}D\'epartment de Physique, Universit\'e de Montr\'eal, Montr\'eal, QC H3C 3J7, Canada.}
  \IEEEauthorblockA{\IEEEauthorrefmark{6}School of Electrical and Computer
  Engineering,\\Cornell University, Ithaca, New York 14850, USA.
                    Email: wilde@cornell.edu 
                    }
                    \thanks{H.~S.~M.~acknowledges support from NSF grant No.~DGE-2139899. M.~C.~acknowledges support by the Laboratory Directed Research and Development (LDRD) program of Los Alamos National Laboratory (LANL) under project number 20260043DR.
This work was supported by the Quantum Science Center (QSC), a National Quantum Information Science Research Center of the U.S. Department of Energy (DOE).}
}

\maketitle

\begin{abstract}We introduce the genuine multipartite Rains entanglement (GMRE) as a measure of genuine multipartite entanglement that can be computed using semi-definite programming. Similar to the Rains relative entropy (its bipartite counterpart), the GMRE is monotone under selective quantum operations that completely preserve the positivity of the partial transpose, implying that it is a multipartite entanglement monotone. As a consequence, we show that the GMRE bounds both the one-shot standard and probabilistic approximate GHZ-distillable entanglement from above. We also develop a generalization of this quantity that incorporates other entropies, including quantum R\'enyi relative entropies. 

\end{abstract}

\begin{IEEEkeywords}
Quantum entanglement, multipartite systems, multipartite entanglement measures, distillable entanglement.
\end{IEEEkeywords}

\section{Introduction}

\textbf{Background and motivation}:
Entanglement is a distinguishing feature of quantum mechanics, with pure maximally entangled states being vital components in foundational protocols~\cite{bennet_etal1993,bennett_wiesner1992}. As such, distilling pure maximally entangled states is a pertinent task. However, distillable entanglement~\cite{bennett_divincenzo1996} is not convex~\cite{shor_smolin_terhal2001} and may not even be computable~\cite{wolf_cubitt_perezgarcia2011}, motivating the search for bounds on it.

The Rains relative entropy  is an entanglement measure that is efficiently computable using semi-definite programming (SDP)~\cite{rains2001} and does not increase on average under selective quantum operations that completely preserve the positivity of the partial transpose~\cite{eisert2022}---a property implying that the quantity is an entanglement monotone~\cite{vidal2000}. The Rains relative entropy is also an upper bound on the bipartite distillable entanglement~\cite{rains2001} and the probabilistic approximate distillable entanglement (PADE)~\cite{eisert2022}, the latter property being stronger than the former due to the PADE being no smaller than the standard distillable entanglement~\cite{eisert2022}.

Characterizing the entanglement of multipartite systems has broad applications, including many-body systems~\cite{wang_etal2025} and quantum cryptography~\cite{cleve_gottesman_lo1999}, but it is a more complex task than its bipartite counterpart. In this setting, one goal is to detect ``genuine multipartite entanglement'' (GME)~\cite{Dur2000,Seevinck2001} (see also~\cite{ma_li_shang2025}). Quantum states that exhibit GME cannot be written as a convex combination of states for which each state in the decomposition is separable with respect to an arbitrary bipartition of the system.

\textbf{Summary of contributions}: Our paper introduces the genuine multipartite Rains entanglement (GMRE) as a genuine multipartite entanglement measure that generalizes the Rains relative entropy. After covering preliminary topics, we define the GMRE, as well as a generalization that allows for  incorporating other quantum entropies. We show that the GMRE can be written as a convex optimization problem and can be computed using semi-definite programming. We then prove that the GMRE satisfies several desirable properties, primarily focusing on selective positive partial transpose (PPT) monotonicity. Next, we provide a lower bound for the GMRE based on approximate GHZ states and use this to establish upper bounds on the one-shot GHZ-distillable entanglement and PADE. We also prove that a regularized GMRE measure bounds the asymptotic GHZ-distillable entanglement and PADE, with a single-letter upper bound being out of reach due to difficulties proving subadditivity of GMRE, in turn related to the lack of tensor stability of the set of biseparable states~\cite{Navascues2020}. Finally, we write the set over which the GMRE is minimized in terms of semi-definite constraints.

\section{Preliminaries}

Let $\mc{L}(\hilb)$ denote the set of linear operators acting on a Hilbert space $\mc{H}$,  $\mc{L}_+(\hilb)$ the set of positive semi-definite (PSD) operators, and $\mc{S}(\mc{H})$ the set of quantum states (PSD operators with unit trace). Let $\operatorname{CP}(\hilb,\hilb')$ denote the set of completely positive maps from $\mc{L}(\hilb)$ to $\mc{L}(\hilb')$, $\textrm{CPTP}(\hilb,\hilb')$ the set of quantum channels (completely positive trace-preserving maps), and $\textrm{LOCC}(\hilb,\hilb')$ the set of LOCC quantum channels~\cite{chitambar2014}. In general, we will use $\hilb_k$ to refer to the tensor-product space $\hilb_{A_1}\otimes\hilb_{A_2}\otimes\ldots\otimes\hilb_{A_k}$ where $k\in\ZZ^{\geq2}$.

The fidelity of states $\rho,\sigma\in\mc{S}(\hilb)$ is defined as~\cite{uhlmann1976}
\begin{align}\label{def:fidelity}
    F(\rho,\sigma)\coloneq\norm{\sqrt{\rho}\sqrt{\sigma}}_1^2.
\end{align}
where $\norm{\cdot}_1\coloneq\Tr[\abs{\cdot}]$ is the trace norm. The binary entropy of a probability distribution $(p,1-p)$ is defined for $p\in[0,1]$ as
\begin{align}
    h_2(p)\coloneq-p\log_2p-(1-p)\log_2(1-p).\label{def:binary_entropy}
\end{align}
The quantum relative entropy of $\rho\in \mc{S}(\hilb)$ and $\sigma\in \mc{L_+}(\hilb)$ is defined as~\cite{umegaki1962}
\begin{equation}
    D(\rho\|\sigma)\coloneq
    \begin{cases}
        \Tr[\rho(\log_2\rho-\log_2\sigma)] & \textrm{if }(\star) \\
        +\infty & \textrm{else}
    \end{cases}
    \label{def:quantum_rel_entropy}
\end{equation}
where $(\star)\equiv\textrm{supp}(\rho)\subseteq\textrm{supp}(\sigma)$. The sandwiched R\'enyi relative entropy $\widetilde{D}_\alpha$ is defined for $\rho\in\mc{S}(\hilb)$, $\sigma\in\mc{L}_+(\hilb)$, and $\alpha\in(0,1)\cup(1,\infty)$ as~\cite{mullerlennert2013,wilde2014}
\begin{equation}
    \widetilde{D}_\alpha(\rho\|\sigma)\coloneq
    \begin{cases}
        \frac{1}{\alpha-1}\log_2\Tr\bb{\bp{\sigma^{\frac{1-\alpha}{2\alpha}}\rho\sigma^{\frac{1-\alpha}{2\alpha}}}^\alpha} & \textrm{if }(\star\star) \\
        +\infty & \textrm{else,}
    \end{cases}
    \label{def:sandwiched_quantum_rel_entropy}
\end{equation}
where $(\star\star)\equiv \alpha\in(0,1) \lor \left(\alpha\in(1,\infty)  \land (\star)\right)$,
and it satisfies the following~\cite{mullerlennert2013,wilde2014}:
\begin{align}
    \lim_{\alpha\ra1}\widetilde{D}_\alpha(\rho\|\sigma)&=D(\rho\|\sigma).
    \label{res:limit_sandwich_renyi_quantum_rel_entropy}
\end{align}
Given $\mc{N}\in\textrm{CPTP}(\mc{H},\mc{H}')$, for all $\rho\in \mc{S}(\mc{H})$, $\sigma\in\mc{L}_+(\mc{H})$, and $\alpha\in[\frac{1}{2},1)\cup(1,\infty)$, the data-processing inequality asserts that
\begin{align}
    D(\rho\|\sigma)&\geq D(\mc{N}(\rho)\|\mc{N}(\sigma)), \label{res:data_processing_ineq}\\
    \widetilde{D}_\alpha(\rho\|\sigma)&\geq\widetilde{D}_\alpha(\mc{N}(\rho)\|\mc{N}(\sigma)),
    \label{res:data_processing_sandwich_renyi}
\end{align}
where~\eqref{res:data_processing_ineq} was established by~\cite{lindblad1975} and~\eqref{res:data_processing_sandwich_renyi} by~\cite{frank2013} (see also~\cite{wilde2018}). Let $\mc{X}$ be a finite alphabet, $p\colon\mc{X}\ra[0,1]$ a probability distribution, and $q\colon\mc{X}\ra[0,\infty)$ a non-negative function. Consider a classical-quantum state $\omega\in \mc{S}(\hilb_{XA})$ and operator $\tau\in\mc{L}_+(\hilb_{XA})$ of the form
\begin{align}
    &\omega\coloneq\sum_xp(x)\op{x}{x}\otimes\omega_x,&\tau\coloneq\sum_xq(x)\op{x}{x}\otimes\tau_x,&
    \label{def:states_in_direct_sum_prop}
\end{align}
where $\omega_x\in \mc{S}(\hilb_A)$ and $\tau_x\in\mc{L}_+(\hilb_A)$ for all $x$. The quantum relative entropy satisfies a direct-sum equality~\cite{khatri2024} and the sandwiched R\'enyi relative entropy a direct-sum inequality~\cite{eisert2022} for $\alpha>1$:
\begin{align}
    D(\omega\|\tau)&=D(p\| q)+\sum_xp(x)D(\omega_x\|\tau_x), \label{direct_sum_prop} \\
    \widetilde{D}_\alpha(\omega\|\tau)&\geq D(p\|q)+\sum_xp(x)\widetilde{D}_\alpha(\omega_x\|\tau_x),
    \label{res:direct_sum_sandwich_renyi}
\end{align}
where $D(p\| q)$ is the classical relative entropy of probability distributions $p$ and $q$, defined as
\begin{equation}
    D(p\|q)\coloneq\begin{cases}
        \sum_xp(x)\log_2\bp{\frac{p(x)}{q(x)}}&\textrm{if }(\star\star\star)\\
        +\infty&\textrm{else,}
    \end{cases}
    \label{classical_rel_entropy}
\end{equation}
where $(\star\star\star)\equiv \textrm{supp}(p)\subseteq\textrm{supp}(q)$. For $\rho\in\mc{S}(\hilb_{AB})$, the Rains relative entropy is defined as~\cite{rains2001}
\begin{align}
    R(\rho)\coloneq\inf_{\sigma\in\mc{S}(\hilb_{AB})}\bc{D(\rho\|\sigma)+\log_2\norm{T_B(\sigma)}_1},
\end{align}
where $T_B(\cdot)\coloneq\sum_{i,j}(\ii^d\otimes\op{i}{j})(\cdot)(\ii^d\otimes\op{i}{j})$ is the partial transpose with respect to subsystem $B$, $\{\ket{i}\}_{i=0}^{d-1}$ is the computational basis of $\mc{H}_B$, and $\ii^d$ is the $d\times d$ identity matrix. A quantum state is positive partial transpose (PPT) if it is positive semi-definite under the action of the partial transpose~\cite{peres1996, horodecki1996}. 

A multipartite state is genuinely entangled if it is not biseparable, meaning that it cannot be written as a convex combination of states that are separable with respect to a bipartition. For example, a tripartite state $\rho\in\mc{S}(\hilb_{ABC})$ is \textit{not} genuinely entangled if it can be written as $\rho=p_{A}\rho_{A|BC}+p_{B}\rho_{B|AC}+p_{C}\rho_{C|AB}$, where $(p_m)_m$ is a probability distribution and each $\rho_{m}$ is separable with respect to the bipartition $m$. Since directly checking for separability is non-trivial, we will instead focus on PPT mixtures (i.e., mixtures of states that are PPT with respect to any bipartition). Such states are easier to characterize using the PPT criterion, which is an operational necessary condition for separability~\cite{peres1996, horodecki1996}.

\section{GMRE definition}

We define the GMRE of $\rho\in\mc{S}(\hilb_k)$~as
\begin{align}
    R(\rho)&\coloneq\hspace{-.3cm}\inf_{\substack{\sigma\in\mc{S}(\hilb_{k}), \\ \sigma=\sum_mr_m\omega_m}}\bc{\!D\bp{\rho\middle\|\sigma}+\log_2\bp{\sum_mr_m\norm{T_m(\omega_m)}_1}\!}\!,\nonumber
\end{align}
where $m$ labels a unique bipartition of the system, $(r_m)_m$ is a probability distribution, and $\omega_m\in\mc{S}(\hilb_k)$ for each $m$. Note that, for $k=2$, the GMRE reduces to the bipartite Rains relative entropy, as there is only one unique bipartition of a bipartite system. By setting $\sigma = \rho$, the GMRE reduces to a genuine multipartite generalization of the log-negativity~\cite{Zyczkowski1998VolumeSeparability,VidalWerner2002Negativity,plenio2005}, the latter being a well known bipartite entanglement measure. The resultant formula is related to the genuine multipartite negativity, introduced in~\cite{jungnitsch2011}, in the same way that the bipartite log-negativity and negativity are related.

Motivated by the approach in~\cite{eisert2022}, we now extend the GMRE definition to incorporate more general entropies. For a function $\boldsymbol{D}\colon\mc{S}(\hilb_k)\times\mc{L}_+(\hilb_k)\ra\RR$ satisfying the data-processing inequality~\eqref{res:data_processing_ineq} and the scaling property, $\boldsymbol{D}(\rho\|c\sigma)=\boldsymbol{D}(\rho\|\sigma)-\log_2c$ for $c\in\RR^+$, we define the multipartite $\boldsymbol{D}$-Rains entanglement of $\rho\in\mc{S}(\hilb_k)$ as
\begin{align}
    \boldsymbol{R}(\rho)&\coloneq\hspace{-.3cm}\inf_{\substack{\sigma\in\mc{S}(\hilb_{k}), \\ \sigma=\sum_mr_m\omega_m}}\bc{\!\boldsymbol{D}\bp{\rho\middle\|\sigma}+\log_2\bp{\sum_mr_m\norm{T_m(\omega_m)}_1}\!}\!.\nonumber
\end{align}
Since $\widetilde{D}_\alpha$ in \eqref{def:sandwiched_quantum_rel_entropy} satisfies the scaling property and the data-processing inequality for $\alpha\in[\frac{1}{2},1)\cup(1,\infty)$, we define a $\boldsymbol{D}$-Rains entanglement measure with $\boldsymbol{D}\ra\widetilde{D}_\alpha$, which we call the sandwiched R\'enyi--Rains entanglement and denote by $\widetilde{R}_\alpha$.

Let us now show that the GMRE can be viewed as a convex optimization problem. Define a set
    \begin{multline}
        \mc{T}(\hilb_{k})\coloneq\biggl\{\tau=\sum_mq_m\tau_m\in\mc{L}^+(\hilb_{k}):q_m,\tau_m\geq0\fall m,\\
        \sum_mq_m\norm{T_m(\tau_m)}_1\leq1\biggr\},\label{def:general_T_set_definition}
    \end{multline}
    where $m$ labels a unique bipartition of the system. Observe that each $q_m$ can be absorbed into the corresponding operator $\tau_m$, so that we can also define $\mc{T}(\hilb_k)$ with $q_m\tau_m\ra\widetilde{\tau}_m$.
    
\begin{theorem}\label{thm:general_convex_opt_theorem}
    The following equality holds for every multipartite $\boldsymbol{D}$-Rains entanglement measure and $\rho\in\mc{S}(\mc{H}_{k})$:
    \begin{align}
        \boldsymbol{R}(\rho)&=\inf_{\tau\in\mc{T}(\hilb_{k})}\boldsymbol{D}(\rho\|\tau).
        \label{res:general_multi_rains_as_convex_opt}
    \end{align}
\end{theorem}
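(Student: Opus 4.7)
The plan is to prove the equality by separately establishing the two inequalities $\boldsymbol{R}(\rho) \geq \inf_{\tau \in \mc{T}(\hilb_k)} \boldsymbol{D}(\rho \| \tau)$ and $\boldsymbol{R}(\rho) \leq \inf_{\tau \in \mc{T}(\hilb_k)} \boldsymbol{D}(\rho \| \tau)$, in each direction converting a feasible point of one optimization into a feasible point of the other. The essential tool is the assumed scaling property $\boldsymbol{D}(\rho \| c\sigma) = \boldsymbol{D}(\rho \| \sigma) - \log_2 c$, which allows the $\log_2$ penalty term in the $\boldsymbol{R}$-objective to be absorbed into a rescaling of the reference operator.

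For $\boldsymbol{R}(\rho) \geq \inf_\tau \boldsymbol{D}(\rho \| \tau)$, given any feasible $\sigma = \sum_m r_m \omega_m$ for $\boldsymbol{R}(\rho)$, set $N \coloneq \sum_m r_m \norm{T_m(\omega_m)}_1 > 0$ and define $\tau \coloneq \sigma/N$ with the decomposition $q_m \coloneq r_m/N$ and $\tau_m \coloneq \omega_m$. Then $\sum_m q_m \norm{T_m(\tau_m)}_1 = 1$, so $\tau \in \mc{T}(\hilb_k)$, and the scaling property yields $\boldsymbol{D}(\rho \| \tau) = \boldsymbol{D}(\rho \| \sigma) + \log_2 N$, which is exactly the $\boldsymbol{R}(\rho)$-objective at $\sigma$. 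Taking the infimum over $\sigma$ produces the inequality.

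For $\boldsymbol{R}(\rho) \leq \inf_\tau \boldsymbol{D}(\rho \| \tau)$, given $\tau = \sum_m q_m \tau_m \in \mc{T}(\hilb_k)$, set $c \coloneq \Tr[\tau]$ (assuming $c > 0$, else $\boldsymbol{D}(\rho \| \tau) = +\infty$ and there is nothing to prove), $\sigma \coloneq \tau/c$, $r_m \coloneq q_m \Tr[\tau_m]/c$, and $\omega_m \coloneq \tau_m/\Tr[\tau_m]$ (omitting indices with $\Tr[\tau_m] = 0$). Then $\sigma = \sum_m r_m \omega_m$ is a convex combination of states, and a direct calculation yields
\begin{equation*}
\sum_m r_m \norm{T_m(\omega_m)}_1 = \frac{1}{c}\sum_m q_m \norm{T_m(\tau_m)}_1 \leq \frac{1}{c}.
\end{equation*}
Combined with $\boldsymbol{D}(\rho \| \sigma) = \boldsymbol{D}(\rho \| \tau) + \log_2 c$, this shows the $\boldsymbol{R}(\rho)$-objective at $\sigma$ is at most $\boldsymbol{D}(\rho \| \tau)$, and infimizing over $\tau$ closes the argument.

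I do not expect any serious obstacle: the proof reduces to careful normalization bookkeeping leveraging the scaling property. The only minor points requiring attention are verifying that the feasibility constraints of each optimization transfer correctly under the rescalings and handling the degenerate zero-trace summands, which is harmless since such summands contribute nothing to the relevant sums and may be discarded from the decomposition.
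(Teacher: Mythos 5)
Your proof is correct and follows essentially the same route as the paper's: both inequalities are obtained by rescaling a feasible point of one optimization into a feasible point of the other via the scaling property of $\boldsymbol{D}$, with the same normalization $\tau_m/\Tr[\tau_m]$ and weights $q_m\Tr[\tau_m]/\Tr[\tau]$ in the second direction. Your first direction is in fact slightly more streamlined, since you rescale $\sigma$ by $\sum_m r_m\norm{T_m(\omega_m)}_1$ directly rather than first enlarging the feasible set to subnormalized operators as the paper does, but the underlying construction is identical.
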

\begin{proof}
    See Appendix~\ref{app:multi_rains_convex}.
\end{proof}
\begin{remark}
    The GMRE can be viewed as a convex optimization problem because the quantum relative entropy is jointly convex~\cite{lieb_ruskai1973} and the set $\mc{T}(\hilb_k)$ is convex. Furthermore, the GMRE can be computed using semi-definite programming, by using the relative-entropy optimization methods of~\cite{he_saunderson_fawzi2025,koßmann_schwonnek2025}. We note that the runtime for this computation is exponential in the number of parties but polynomial in dimension of the system for a fixed number of parties. In Appendix~\ref{app:multi_rains_and_gmn}, we show that there is a strict separation between the GMRE and the genuine multipartite log-negativity.
\end{remark}

Theorem~\ref{thm:general_convex_opt_theorem} implies alternate formulas for the genuine multipartite Rains and sandwiched R\'enyi--Rains measures:
\begin{align}
    R(\rho)&=\inf_{\tau\in\mc{T}}D(\rho\|\tau),&\widetilde{R}_\alpha(\rho)&=\inf_{\tau\in\mc{T}}\widetilde{D}_\alpha(\rho\|\tau). \label{def:convex_multi_rains_and_sandwich_renyi_rains}
\end{align}
See Appendix~\ref{app:sandwiched-RR-to-GRME} for a proof that $\lim_{\alpha\ra1}\widetilde{R}_\alpha(\rho)=R(\rho)$.

\section{Selective PPT monotonicity of GMRE}

Using the criteria for a PPT map from~\cite{ishizaka2005}, we can extend the definition of a selective PPT operation from~\cite{eisert2022} to the multipartite setting. 

\begin{definition}
\label{def:selective_ppt_op}
    A tuple $\left(\mc{P}_x\right)_x$ is a selective multipartite PPT operation if 
    \begin{enumerate}
        \item $\mc{P}_x\in\operatorname{CP}(\hilb_{k},\hilb'_{k})$ for all $x$,
        \item $T_{m}\circ\mc{P}_x\circ T_{m}\in\operatorname{CP}(\hilb_{k},\hilb'_{k})$ for all $x$ and for all $m$,     where $m$ labels corresponding bipartitions of $\hilb_k$ and $\hilb'_k$.    
        \item The sum map $\sum_x\mc{P}_x$ is trace preserving.
    \end{enumerate}
\end{definition}
Given $\left(\mc{P}_x\right)_x$ and $\sigma\in \mc{S}(\hilb_{k})$, define states $\sigma_x$ and probabilities $q_x$ as $\sigma_x\coloneq\frac{\mc{P}_x(\sigma)}{q_x}$ and $q_x\coloneq\Tr[\mc{P}_x(\sigma)]$. Consider an arbitrary PPT mixed-state decomposition of $\sigma$ given by $\sigma=\sum_mr_m\omega_m$, where $m$ labels the bipartition. For each $\omega_m$, define states $\sigma_{m,x}$ and probabilities $q_{x|m}$ as $\sigma_{m,x}\coloneq\frac{\mc{P}_x(\omega_m)}{q_{x|m}}$ and $q_{x|m}\coloneq\Tr[\mc{P}_x(\omega_m)]$.

Suppose $\sigma$ is a PPT mixture. We will show that the action of a selective PPT operation on $\sigma$ results in a PPT mixture. We can write $\sigma_x$ as $\sigma_x=\frac{1}{q_x}\mc{P}_x\bp{\sum_mr_m\omega_m}=\frac{1}{q_x}\sum_mr_m\mc{P}_x(\omega_m)$. Noting that the partial transpose is self-inverse, it follows that
\begin{align}
    T_m(\mc{P}_x(\omega_m))&=T_m\circ\mc{P}_x(T_m(T_m(\omega_m))) \notag\\
    &=T_m\circ\mc{P}_x\circ T_m(T_m(\omega_m))\geq0.
    \label{px_omegam_ppt_condition}
\end{align}
By definition, $T_m(\omega_m)\geq0$ and $T_m\circ\mc{P}_x\circ T_m$ is completely positive for all $m$. Thus, the fact that $\sigma_x$ is a convex combination of $\mc{P}_x(\omega_m)$ along with~\eqref{px_omegam_ppt_condition} implies that $\sigma_x$ can be expressed as a PPT mixture. Similarly, we see that
\begin{align}
    T_m(\sigma_{m,x})&=T_m\bp{\frac{\mc{P}_x(\omega_m)}{q_{x|m}}} \notag \\
    &=\frac{1}{q_{x|m}}T_m\circ\mc{P}_x\circ T_m(T_m(\omega_m))\geq0,
\end{align}
where the inequality follows from~\eqref{px_omegam_ppt_condition}, and we conclude that $\sigma_{m,x}$ is PPT with respect to $T_m$. 

For each $\omega_m$, we have that $\sum_xq_{x|m}\norm{T_m(\sigma_{m,x})}_1\leq \norm{T_m(\omega_m)}_1$ using~\cite[Eq.~(8)]{plenio2005} or~\cite[Prop.~2.1]{eisert2006}. With this, we see that
\begin{align}
    \sum_mr_m\norm{T_m(\omega_m)}_1&\geq\sum_mr_m\sum_xq_{x|m}\norm{T_m(\sigma_{m,x})}_1 \notag\\
    &=\sum_{m,x}r_mq_{x|m}\norm{T_m(\sigma_{m,x})}_1.
    \label{normIneq}
\end{align}
\begin{theorem}\label{rains_ppt_monotone_thm}
Let $\rho\in \mc{S}(\hilb_{k})$, and let $\left(\mc{P}_x\right)_x$ be a selective PPT operation. For all $x$, define a state $\rho_x$ and a probability $p_x$ as
\begin{align}
    &\rho_x\coloneq\frac{\mc{P}_x(\rho)}{p_x},&p_x\coloneq\Tr[\mc{P}_x(\rho)].&
\end{align}
Let $\mc{X}^+\coloneq\{x:p_x>0\}$. Then, every multipartite $\boldsymbol{D}$-Rains entanglement measure that satisfies the direct-sum inequality~\eqref{res:direct_sum_sandwich_renyi} is a selective PPT monotone. That is, for all $\rho\in\mc{S}(\hilb_k)$,
\begin{align}
    \boldsymbol{R}(\rho)&\geq\sum_{x\in\mc{X}^+}p_x\boldsymbol{R}(\rho_x).
\end{align}
\end{theorem}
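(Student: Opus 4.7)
The plan is to work directly from the original definition of $\boldsymbol{R}(\rho)$: I would fix an arbitrary decomposition $\sigma = \sum_m r_m \omega_m$ feasible for the infimum, transport it through each branch $\mc{P}_x$ of the selective PPT operation, and combine the resulting branch-wise bounds with a global bound obtained via a classical flag register. The three main tools are data processing for $\boldsymbol{D}$, the direct-sum inequality~\eqref{res:direct_sum_sandwich_renyi}, and the trace-norm inequality~\eqref{normIneq} already in hand from the excerpt.

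For each $x \in \mc{X}^+$, the post-selection state $\sigma_x$ inherits the bipartition-indexed convex decomposition $\sigma_x = \sum_m (r_m q_{x|m}/q_x)\,\omega_{m,x}$, so feeding this decomposition into the infimum defining $\boldsymbol{R}(\rho_x)$ yields
\[
\boldsymbol{R}(\rho_x) \;\leq\; \boldsymbol{D}(\rho_x\|\sigma_x) + \log_2(C_x/q_x),
\]
where $C_x := \sum_m r_m q_{x|m}\|T_m(\omega_{m,x})\|_1$. Writing $C := \sum_m r_m \|T_m(\omega_m)\|_1$, the estimate~\eqref{normIneq} becomes $\sum_x C_x \leq C$. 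I would then apply the TP classical-flag channel $\mc{N}(\cdot) := \sum_x \op{x}{x}\otimes\mc{P}_x(\cdot)$ (trace preserving by condition~3 of Definition~\ref{def:selective_ppt_op}) to both $\rho$ and $\sigma$; data processing gives $\boldsymbol{D}(\rho\|\sigma) \geq \boldsymbol{D}(\mc{N}(\rho)\|\mc{N}(\sigma))$, and since $\mc{N}(\rho)$ and $\mc{N}(\sigma)$ are classical-quantum states with probability distributions $p$ and $q$ and quantum blocks $\rho_x$ and $\sigma_x$, the direct-sum inequality yields
\[
\boldsymbol{D}(\mc{N}(\rho)\|\mc{N}(\sigma)) \;\geq\; D(p\|q) + \sum_{x\in\mc{X}^+} p_x\, \boldsymbol{D}(\rho_x\|\sigma_x).
\]
Substituting the branch-wise bound, using the identity $D(p\|q) + \sum_x p_x \log_2(q_x/C_x) = \sum_x p_x \log_2(p_x/C_x)$ to absorb the $q_x$ factors, and adding $\log_2 C$ to both sides produces
\[
\boldsymbol{D}(\rho\|\sigma) + \log_2 C \;\geq\; \sum_x p_x \log_2\!\tfrac{p_x C}{C_x} + \sum_{x\in\mc{X}^+} p_x\, \boldsymbol{R}(\rho_x).
\]
The first sum on the right is non-negative by the log-sum inequality because $\sum_x C_x/C \leq 1$, so taking the infimum of the left over $\sigma$ gives $\boldsymbol{R}(\rho) \geq \sum_{x\in\mc{X}^+} p_x \boldsymbol{R}(\rho_x)$, completing the proof.

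The main obstacle is the two-layer bookkeeping (outcomes $x$ and bipartitions $m$) intertwined with the log-norm penalty $\log_2 C$ that distinguishes the GMRE from an ordinary relative entropy. What makes the bookkeeping close is precisely the inequality~\eqref{normIneq}, which survives even though the decomposition components $\omega_m$ need not be individually PPT, together with the fact that the normalization factor $q_x$ from the post-selection cancels against the $D(p\|q)$ correction from the direct-sum step. Boundary terms with $p_x = 0$ simply drop out of the sums, while the cases with $p_x > 0$ but $q_x = 0$ force $\boldsymbol{D}(\rho\|\sigma) = +\infty$, so the inequality is trivial there.
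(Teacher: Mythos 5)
Your proposal is correct and follows essentially the same route as the paper's proof: the classical-flag channel plus data processing, the direct-sum inequality, the trace-norm contraction \eqref{normIneq}, the observation that $\sum_m (r_m q_{x|m}/q_x)\,\sigma_{m,x}$ is a valid feasible decomposition for $\boldsymbol{R}(\rho_x)$, and a concavity/log-sum step to absorb the $D(p\|q)$ term. The only difference is cosmetic bookkeeping — you substitute the branch-wise bound first and verify the residual $\sum_x p_x\log_2(p_xC/C_x)\geq 0$ at the end, whereas the paper applies Jensen's inequality to the $\log_2$ term directly before invoking the definition of $\boldsymbol{R}(\rho_x)$.
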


\begin{proof}
    We will follow the proof of~\cite[Theorem~6]{eisert2022} to prove this statement. Consider $\sigma\in \mc{S}(\mc{H}_{k})$ with an arbitrary mixed-state decomposition $\sigma=\sum_mr_m\omega_m$, where $m$ labels the bipartition. Define $\sigma_x,q_x,\sigma_{m,x},q_{x|m}$ similarly as after Definition~\ref{def:selective_ppt_op}. Define a quantum channel  $\mc{P}(\cdot)\coloneq\sum_x\op{x}{x}\otimes\mc{P}_x(\cdot)$. Then,
\begin{align}
    \boldsymbol{D}(\rho\|\sigma)&\geq \boldsymbol{D}(\mc{P}(\rho)\|\mc{P}(\sigma)) \notag\\
    &=\boldsymbol{D}\!\left(\sum_x\op{x}{x}\otimes\mc{P}_x(\rho)\middle\|\sum_x\op{x}{x}\otimes\mc{P}_x(\sigma)\right) \notag\\
    &=\boldsymbol{D}\!\left(\sum_xp_x\op{x}{x}\otimes\rho_x\middle\|\sum_xq_x\op{x}{x}\otimes\sigma_x\right) \notag\\
    &\geq D(p\|q)+\sum_xp_x\boldsymbol{D}(\rho_x\|\sigma_x),\label{intres:rains_selective_ppt_monotone_step}
\end{align}
where we used the data-processing inequality~\eqref{res:data_processing_ineq} and the direct-sum inequality~\eqref{res:direct_sum_sandwich_renyi}. Using~\eqref{intres:rains_selective_ppt_monotone_step} with~\eqref{normIneq} and the monotonicity of $\log_2$, we find that
\begin{multline}
    \boldsymbol{D}(\rho\|\sigma)+\log_2\bp{\sum_mr_m\norm{T_m(\omega_m)}_1}
    \geq D(p\|q) + \\\sum_{x\in\mc{X}^+}p_x\boldsymbol{D}(\rho_x\|\sigma_x)
    +\log_2\bp{\sum_{m,x}r_mq_{x|m}\norm{T_m(\sigma_{m,x})}_1}.\label{ineq5parts}
\end{multline}
We can then employ~\eqref{classical_rel_entropy} along with the monotonicity and concavity of $\log_2$ to find that
\begin{align}
    &D(p\|q)+\log_2\bp{\sum_{m,x}r_mq_{x|m}\norm{T_m(\sigma_{m,x})}_1} \notag\\
    &\geq D(p\|q)+\log_2\bp{\sum_{x\in\mc{X}^+}p_x\sum_{m}\frac{r_mq_{x|m}}{p_x}\norm{T_m(\sigma_{m,x})}_1} \notag\\
    &\geq D(p\|q)+\sum_{x\in\mc{X}^+}p_x\log_2\bp{\sum_{m}\frac{r_mq_{x|m}}{p_x}\norm{T_m(\sigma_{m,x})}_1} \notag\\
    &=\sum_{x\in\mc{X}^+}p_x\bp{\log_2\bp{\frac{p_x}{q_x}}+\log_2\bp{\sum_{m}\frac{r_mq_{x|m}}{p_x}\!\norm{T_m(\sigma_{m,x})}_1\!}\!\!} \notag\\
    &=\sum_{x\in\mc{X}^+}p_x\log_2\bp{\sum_{m}\frac{r_mq_{x|m}}{q_x}\norm{T_m(\sigma_{m,x})}_1}.
    \label{simpIneq}
\end{align}
Let us note that $\sum_m\frac{r_mq_{x|m}}{q_x}\sigma_{m,x}$ is indeed a valid mixed-state decomposition of $\sigma_x$:
\begin{align}
    \sum_m\frac{r_mq_{x|m}}{q_x}\sigma_{m,x}&=\sum_m\frac{r_mq_{x|m}}{q_x}\frac{\mc{P}_x(\omega_m)}{q_{x|m}} \notag\\
    &=\frac{1}{q_x}\sum_mr_m\mc{P}_x(\omega_m)=\sigma_x.
\end{align}
Additionally, $\bp{\frac{r_mq_{x|m}}{q_x}}_m$ is a probability distribution because $r_m,q_{x|m},q_x\geq0$ and $\sum_m\frac{r_mq_{x|m}}{q_x}=\frac{1}{q_x}\sum_mr_mq_{x|m}=\frac{1}{q_x}\Tr[\mc{P}_x(\sigma)]=1$. Then, using~\eqref{simpIneq} with~\eqref{ineq5parts}, we arrive at
\begin{multline}
     \boldsymbol{D}(\rho\|\sigma)+\log_2\bp{\sum_mr_m\norm{T_m(\omega_m)}_1}\geq\sum_{x\in\mc{X}^+}p_x\boldsymbol{D}(\rho_x\|\sigma_x) \\
     +\sum_{x\in\mc{X}^+}p_x\log_2\bp{\sum_{m}\frac{r_mq_{x|m}}{q_x}\norm{T_m(\sigma_{m,x})}_1} \\
     \geq\sum_{x\in\mc{X}^+}p_x\boldsymbol{R}(\rho_x).
     \label{eq:pf-key-ineq}
\end{multline}
Since~\eqref{eq:pf-key-ineq} holds for all $\sigma$, we can take an infimum over $\sigma$ and conclude that $\boldsymbol{R}(\rho)\geq\sum_{x\in\mc{X}^+}p_x\boldsymbol{R}(\rho_x)$.
\end{proof}
\begin{remark}
    The sets of LOCC, completely PPT-preserving, and selective LOCC channels are contained in the set of selective PPT operations, implying that every multipartite $\boldsymbol{D}$-Rains entanglement measure is an LOCC, PPT, and selective LOCC monotone.
\end{remark}

\section{Other Properties of GMRE}

\begin{proposition}
\label{prop:multi_rains_nonnegative}
    The GMRE $R(\rho)$ is nonnegative for all $\rho\in\mc{S}(\hilb_k)$, and it is equal to zero for all biseparable states.
\end{proposition}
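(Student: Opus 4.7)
The plan is to prove the two claims independently, using the convex reformulation from Theorem~\ref{thm:general_convex_opt_theorem} for nonnegativity and the original defining expression of $R(\rho)$ for the biseparable case. The key ingredients are (i) that every admissible $\tau \in \mc{T}(\hilb_k)$ is subnormalized, (ii) the scaling property of $D$ together with Klein's inequality, and (iii) the fact that separable states are PPT.

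For nonnegativity, I would start from the reformulation $R(\rho) = \inf_{\tau \in \mc{T}(\hilb_k)} D(\rho\|\tau)$ and show that any $\tau = \sum_m q_m \tau_m \in \mc{T}(\hilb_k)$ satisfies $\Tr[\tau] \leq 1$. Since $\tau_m \geq 0$, one has $\Tr[\tau_m] = \Tr[T_m(\tau_m)] \leq \|T_m(\tau_m)\|_1$ (trace is bounded above by trace norm for Hermitian operators), so $\Tr[\tau] = \sum_m q_m \Tr[\tau_m] \leq \sum_m q_m \|T_m(\tau_m)\|_1 \leq 1$. Setting $t \coloneq \Tr[\tau]$: if $t = 0$ then $\tau = 0$ and $D(\rho\|\tau) = +\infty$; otherwise $\tau/t$ is a state, and the scaling property gives $D(\rho\|\tau) = D(\rho\|\tau/t) - \log_2 t \geq 0$, where I use $D(\rho\|\tau/t) \geq 0$ (Klein) and $-\log_2 t \geq 0$. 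Taking the infimum over $\tau \in \mc{T}(\hilb_k)$ yields $R(\rho) \geq 0$.

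For the biseparable case, I would plug directly into the original infimum. By hypothesis, there exist a probability distribution $(p_m)_m$ and states $\rho_m$, each separable across bipartition $m$, with $\rho = \sum_m p_m \rho_m$. Separable states are PPT, so $T_m(\rho_m)$ is itself a state and $\|T_m(\rho_m)\|_1 = 1$. Taking $\sigma = \rho$ with this decomposition, the two terms in the defining expression evaluate to $D(\rho\|\rho) = 0$ and $\log_2(\sum_m p_m \cdot 1) = 0$, giving $R(\rho) \leq 0$, which combined with nonnegativity yields $R(\rho) = 0$. I do not expect a substantial obstacle: the argument is essentially a one-line consequence of Theorem~\ref{thm:general_convex_opt_theorem} plus Klein's inequality, with the only minor care needed being the handling of the degenerate cases $\Tr[\tau] = 0$ or $\operatorname{supp}(\rho) \not\subseteq \operatorname{supp}(\tau)$ (both of which trivially give $D(\rho\|\tau) = +\infty \geq 0$) and the verification that the biseparable decomposition of $\rho$ is admissible in the infimum defining $R$.
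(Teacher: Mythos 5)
Your proposal is correct and follows essentially the same route as the paper: both establish $\Tr[\tau]\leq 1$ for $\tau\in\mc{T}(\hilb_k)$ via invariance of the trace under partial transpose and the bound $\Tr[X]\leq\norm{X}_1$, then invoke Klein's inequality (you derive the subnormalized version from the scaling property, the paper cites it directly), and both handle biseparable states by observing they are PPT mixtures so that $\sigma=\rho$ is an admissible choice achieving zero. The minor differences (using the original definition versus the $\mc{T}$-reformulation for the biseparable part, and your explicit treatment of the degenerate cases) do not change the substance of the argument.
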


\begin{proof}
    See Appendix~\ref{app:nonnegativity}.
\end{proof}

\subsection{Lower bound on the GMRE and sandwiched R\'enyi--Rains entanglement}

Let $\Phi^d\in\mc{S}(\hilb^{\otimes n})$ with $\dim(\hilb)=d$ denote the $n$-qudit GHZ state given by $\Phi^d\coloneq\frac{1}{d}\sum_{i,j=0}^{d-1}\op{i}{j}^{\otimes n}$.

\begin{lemma} \label{multipartite_rains_lower_bound_lemma}
    Let $\rho\in\hilb_k$ where $\dim(\hilb_{A_i})=d\ \fall i\in\{1,\ldots,k\}$. Let $F\equiv F(\Phi^d,\rho)$ denote the fidelity~\eqref{def:fidelity} of the GHZ state and $\rho$, and $h_2$ the binary entropy~\eqref{def:binary_entropy}. Suppose that $F\geq 1/d$. Then the GMRE satisfies the following inequality for all $\rho\in\mc{S}(\hilb_k)$:
    \begin{align}\label{res:multi_rains_lower_bound}
        R(\rho)& \geq D\bp{(F,1-F)\middle\|(1/d,1-1/d)} \\
        & \geq F\log_2d - h_2(F).
    \end{align}
\end{lemma}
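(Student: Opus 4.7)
My approach is to combine the convex form $R(\rho) = \inf_{\tau \in \mc{T}(\hilb_k)} D(\rho\|\tau)$ from Theorem~\ref{thm:general_convex_opt_theorem} with the data-processing inequality~\eqref{res:data_processing_ineq} applied to the binary coarse-graining $\mc{N}(X) \coloneq \Tr[\Phi^d X]\op{0}{0} + \Tr[(\ii-\Phi^d)X]\op{1}{1}$ associated with the POVM $\{\Phi^d,\ii-\Phi^d\}$. For any $\tau \in \mc{T}(\hilb_k)$ this gives $D(\rho\|\tau) \geq D((F,1-F)\,\|\,(g,\Tr[\tau]-g))$, where $g \coloneq \Tr[\Phi^d \tau]$.

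The key lemma is the uniform bound $g \leq 1/d$. Fixing a decomposition $\tau = \sum_m q_m \tau_m$ witnessing membership in $\mc{T}(\hilb_k)$, the identity $\Tr[\Phi^d \tau_m] = \Tr[T_m(\Phi^d)\, T_m(\tau_m)]$ combined with H\"older's inequality yields $\Tr[\Phi^d\tau_m] \leq \|T_m(\Phi^d)\|_\infty\|T_m(\tau_m)\|_1$. The central computation is that $\|T_m(\Phi^d)\|_\infty = 1/d$ for every bipartition $m=A|B$: writing $\Phi^d = \frac{1}{d}\sum_{i,j}\ket{i}\!\bra{j}^{\otimes k}$ and regrouping the factors according to the bipartition, the partial transpose on the $B$ block converts $\Phi^d$ into $\frac{1}{d}$ times the swap operator between registers $A$ and $B$, restricted to the $d^2$-dimensional subspace populated by the GHZ state, whose eigenvalues are $\pm 1$. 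Multiplying by $q_m$, summing over $m$, and applying the defining constraint $\sum_m q_m\|T_m(\tau_m)\|_1 \leq 1$ from~\eqref{def:general_T_set_definition} then yields $g \leq 1/d$.

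To close the argument, I would use $F\geq 1/d\geq g$ together with $\Tr[\tau]\leq 1$, the latter following from $\Tr[\tau_m]=\Tr[T_m(\tau_m)]\leq\|T_m(\tau_m)\|_1$ for $\tau_m\geq 0$. The bivariate map $(g,h)\mapsto F\log_2(F/g)+(1-F)\log_2((1-F)/h)$ is decreasing in $h\in(0,1-g]$ and, at $h=1-g$, decreasing in $g\in(0,F]$; hence $D((F,1-F)\,\|\,(g,\Tr[\tau]-g))\geq D((F,1-F)\,\|\,(1/d,1-1/d))$. Taking an infimum over $\tau$ establishes the first inequality. For the second, an algebraic expansion gives $D((F,1-F)\,\|\,(1/d,1-1/d))=\log_2 d-h_2(F)-(1-F)\log_2(d-1)$, and dropping $\log_2(d-1)$ to $\log_2 d$ produces $F\log_2 d-h_2(F)$. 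The main technical obstacle is the uniform operator-norm identity $\|T_m(\Phi^d)\|_\infty = 1/d$, which must hold simultaneously over all bipartitions even though the individual $T_m(\tau_m)$ are not required to be positive semi-definite.
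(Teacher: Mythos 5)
Your proposal is correct and follows essentially the same route as the paper's proof: the same binary measurement channel built from the POVM $\{\Phi^d,\ii-\Phi^d\}$ together with data processing, the same partial-transpose/swap-operator/H\"older argument establishing $\Tr[\Phi^d\tau]\leq 1/d$ uniformly over $\mc{T}(\hilb_k)$, the same monotonicity properties of the binary relative entropy in each argument (the paper's Lemmas~\ref{lem:mono-rel-ent-1st} and~\ref{lem:mono-rel-ent-2nd}), and an equivalent algebraic step at the end (your replacement of $\log_2(d-1)$ by $\log_2 d$ is exactly the paper's dropping of the non-negative term $-(1-F)\log_2(1-1/d)$). The only cosmetic difference is that you compute $\norm{T_m(\Phi^d)}_\infty=1/d$ exactly, whereas the paper only needs the one-sided bound $\norm{F_m}_\infty\leq 1$.
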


\begin{proof}
    Define a measurement channel $\mc{M}$ with action
    \begin{align}
        \mc{M}(\cdot)&\coloneq\Tr\bb{\Phi^d(\cdot)}\op{1}{1}+\Tr\bb{(\ii^d-\Phi^d)(\cdot)}\op{0}{0}.
        \label{def:measurement_channel_m}
    \end{align}
    Because $\Phi^d$ is a pure state, $F=F(\Phi^d,\rho)=\Tr[\Phi^d\rho]$. Thus, we can write $\mc{M}(\rho)=F\op{1}{1}+(1-F)\op{0}{0}$. Let $\sigma\in\mc{T}(\hilb_k)$ and consider the following:
    \begin{align}
        D(\rho\|\sigma)&\geq D(\mc{M}(\rho)\|\mc{M}(\sigma)) \notag\\
        &=D\bp{(F,1-F)\middle\|(\Tr[\Phi^d\sigma],\Tr[\sigma]-\Tr\bb{\Phi^d\sigma})} \notag\\
        &\geq D\bp{(F,1-F)\middle\|(\Tr[\Phi^d\sigma],1-\Tr\bb{\Phi^d\sigma})} \notag\\
        &\geq D\bp{(F,1-F)\middle\|(1/d,1-1/d)} \notag\\
        &\geq -h_2(F)-F\log_2\bp{1/d},\label{ghz_relative_entropy_ineq}
    \end{align}
    where the first inequality follows from~\eqref{res:data_processing_ineq} and the equality follows from~\eqref{def:measurement_channel_m} (i.e.,  rewriting the expression on the first line in terms of the classical relative entropy). The second inequality follows because $\Tr[\sigma]\leq1$ (from~\eqref{trace_element_t_leq_1}) and from monotonicity of $\log_2$. The penultimate inequality follows because $\Tr[\Phi^d\sigma]\leq\frac{1}{d}$, as shown below, from the assumption that $F\geq 1/d$, and from a basic property of binary relative entropy (see Lemma~\ref{lem:mono-rel-ent-2nd} in Appendix~\ref{app:monotonicity-props-binary-rel-ent}). The last inequality follows by writing out the binary relative entropy and dropping a non-negative term:
    indeed $\sigma\geq0$ and $\Tr[\Phi^d\sigma]\leq\frac{1}{d}\leq 1$.

    Let us now prove that $\Tr[\Phi^d\sigma]\leq\frac{1}{d}$. Let $\sigma=\sum_mq_m\sigma_m$ be an arbitrary decomposition of $\sigma \in\mc{T}(\hilb_k)$, with $m$ labeling the bipartition. Then, 
    \begin{align}
        \Tr[\Phi^d\sigma]&=\sum_mq_m\Tr\bb{\Phi^d\sigma_m} \notag\\
        &=\sum_mq_m\Tr\bb{\Phi^dT_m(T_m(\sigma_m))} \notag\\
        &=\sum_mq_m\Tr\bb{T_m(\Phi^d)T_m(\sigma_m)} \notag\\
        &=\sum_m\frac{q_m}{d}\Tr\bb{F_mT_m(\sigma_m)} \notag\\
        &\leq\frac{1}{d}\sum_mq_m\norm{T_m(\sigma_m)}_1\leq\frac{1}{d}.
        \label{ghz_trace_ineq}
    \end{align}
    Here, we used the fact that the partial transpose is self-inverse and self-adjoint, and that $\sigma\in\mc{T}$. Also, $T_m(\Phi^d)=\frac{1}{d}F_m$ where $F_m\coloneq\sum_{k,j}\op{k,\ldots k, j, \ldots, j}{j,\ldots,j , k, \ldots, k}$ is the swap operator acting upon subsystems determined by the partition $m$ (see~\cite[Eq.~(2.5.10)]{khatri2024}). Because $\left\|F_m\right\|_\infty \leq 1$, we can apply H\"older duality to conclude the penultimate inequality (i.e., $\norm{X}_1=\sup_{Y \in\mc{L}(\hilb): \left\| Y\right\|_\infty \leq 1 }\left|\Tr[Y X]\right|$).
\end{proof}

\begin{lemma}\label{lem:multi_rains_ghz}
    Let $\Phi^d$ be a GHZ state. Then,
    \begin{align}
        R(\Phi^d)&=\log_2 d.
    \end{align}
\end{lemma}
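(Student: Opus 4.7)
The proof has two halves: a lower bound $R(\Phi^d) \geq \log_2 d$ obtained by invoking Lemma~\ref{multipartite_rains_lower_bound_lemma}, and a matching upper bound $R(\Phi^d) \leq \log_2 d$ produced by exhibiting an explicit feasible point in the convex-optimization formulation~\eqref{res:general_multi_rains_as_convex_opt}.

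For the lower bound, I would set $\rho = \Phi^d$ in Lemma~\ref{multipartite_rains_lower_bound_lemma}. Since $\Phi^d$ is pure, $F(\Phi^d,\Phi^d)=1 \geq 1/d$, so the hypothesis of the lemma is satisfied. The classical binary distribution $(F,1-F)$ reduces to $(1,0)$, and with the convention $0\log_2 0 = 0$, the binary relative entropy $D\bp{(1,0)\|(1/d,1-1/d)}$ evaluates to $\log_2 d$. Hence $R(\Phi^d) \geq \log_2 d$.

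For the upper bound, I would exploit the ``self-duality'' of $\Phi^d$: namely that $\Phi^d$ itself is proportional to a feasible $\tau$. Concretely, I would take $\tau \coloneq \tfrac{1}{d}\Phi^d$ and use the trivial decomposition $q_{m_0}=1$, $\tau_{m_0}=\tau$, and $q_m=0$ for $m \ne m_0$, for any fixed bipartition $m_0$. To verify $\tau \in \mc{T}(\hilb_k)$, I would compute $\norm{T_m(\Phi^d)}_1$ exactly as in the proof of Lemma~\ref{multipartite_rains_lower_bound_lemma}: $T_m(\Phi^d) = \tfrac{1}{d}F_m$, and $F_m$ acts as a swap on the $d^2$-dimensional subspace spanned by $\{\ket{k\ldots k}_A \otimes \ket{j\ldots j}_{A^c}\}_{k,j}$ (and is zero elsewhere), so its eigenvalues are $\pm 1$ with multiplicities $d(d+1)/2$ and $d(d-1)/2$ respectively, giving $\norm{F_m}_1 = d^2$ and therefore $\norm{T_m(\Phi^d)}_1 = d$. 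Consequently $\sum_m q_m \norm{T_m(\tau_m)}_1 = \tfrac{1}{d}\cdot d = 1$, so the membership constraint is saturated.

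Finally, since $\Phi^d$ is pure, $D(\Phi^d\|\Phi^d) = 0$, and the scaling property of the quantum relative entropy gives $D\bp{\Phi^d\,\|\,\tfrac{1}{d}\Phi^d} = D(\Phi^d\|\Phi^d) - \log_2(1/d) = \log_2 d$. Plugging this feasible $\tau$ into~\eqref{res:general_multi_rains_as_convex_opt} yields $R(\Phi^d) \leq \log_2 d$, which combined with the lower bound proves the claim. There is no real obstacle here: the argument mirrors the bipartite Rains computation for the maximally entangled state, and the only multipartite subtlety is the observation that the decomposition in $\mc{T}(\hilb_k)$ may be concentrated on a single bipartition, which suffices because $\norm{T_m(\Phi^d)}_1 = d$ is the same for every bipartition $m$.
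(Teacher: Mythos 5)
Your proof is correct, and the lower-bound half is essentially identical to the paper's (both invoke Lemma~\ref{multipartite_rains_lower_bound_lemma} at $F=1$; you use the binary-relative-entropy form, the paper uses $F\log_2 d - h_2(F)$, which coincide). The upper bound, however, takes a genuinely different feasible point. The paper uses the completely dephased GHZ state $\overline{\Phi}^d=\frac{1}{d}\sum_{i}\op{i}{i}^{\otimes n}$, which is a fully separable state with $\norm{T_m(\op{i}{i}^{\otimes n})}_1=1$ for every term, so membership in $\mc{T}(\hilb_k)$ is immediate and the work goes into evaluating $D(\Phi^d\|\overline{\Phi}^d)=\log_2 d$; a side benefit is that the optimum is exhibited as attained by a genuine (biseparable) state. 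You instead take the subnormalized, NPT operator $\tau=\frac{1}{d}\Phi^d$ concentrated on a single bipartition, so the relative-entropy term is trivially $\log_2 d$ by the scaling property and the work shifts to the trace-norm computation $\norm{T_m(\Phi^d)}_1=d$ via the eigenvalues of the swap-like operator $F_m$ --- a computation the paper already needs in~\eqref{ghz_trace_ineq}, so nothing new is really required. Your route also makes transparent the remark in the paper that setting $\sigma=\rho$ in the GMRE recovers the genuine multipartite log-negativity, and shows that for the GHZ state this log-negativity value is already optimal. Both arguments are valid and of comparable length; your choice of decomposition (all weight on one bipartition, justified because $\norm{T_m(\Phi^d)}_1$ is the same for every $m$) is legitimate under the definition of $\mc{T}(\hilb_k)$, which only requires existence of some decomposition satisfying the constraint.
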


\begin{proof}
    See Appendix~\ref{app:multi_rains_ghz}.
\end{proof}

The following upper bound for the multipartite R\'enyi--Rains entanglement generalizes the bound from~\cite[Eq.~(25)]{eisert2022}.

\begin{lemma} \label{lem:multi_sandwich_renyi_rains_lower_bound}
    Let $\rho\in\mc{S}(\hilb_k)$ where $\dim(\hilb_{A_i})=d\ \fall i\in\{1,\ldots,k\}$, and let $F\equiv F(\Phi^d,\rho)$ denote the fidelity of the GHZ state $\Phi^d$ and $\rho$. Then, for all $\alpha>1$, the multipartite R\'enyi--Rains entanglement satisfies the following for all $\rho\in\mc{S}(\hilb_k)$:
    \begin{align}
        \widetilde{R}_\alpha(\rho)-\frac{\alpha}{\alpha-1}\log_2 F&\geq \log_2d.
    \end{align}
\end{lemma}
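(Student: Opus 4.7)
The plan is to mirror the proof of Lemma~\ref{multipartite_rains_lower_bound_lemma}, but swapping $D$ for $\widetilde{D}_\alpha$ throughout. By the alternate formula~\eqref{def:convex_multi_rains_and_sandwich_renyi_rains}, it suffices to establish, for every $\tau\in\mc{T}(\hilb_k)$, the inequality
\begin{equation}
\widetilde{D}_\alpha(\rho\|\tau)\ \geq\ \log_2 d+\frac{\alpha}{\alpha-1}\log_2 F,\notag
\end{equation}
and then to take an infimum over $\tau\in\mc{T}(\hilb_k)$. The cases $F=0$ and $F\leq 1/d$ make the bound trivial (since $\widetilde R_\alpha\geq 0$ by inspection, as one can use the same argument as for biseparable states in Proposition~\ref{prop:multi_rains_nonnegative}), so I will work under the assumption $F>1/d$.

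First I will apply the measurement channel $\mc{M}$ from~\eqref{def:measurement_channel_m} and the data-processing inequality~\eqref{res:data_processing_sandwich_renyi} for $\widetilde D_\alpha$ at $\alpha\in(1,\infty)$ to obtain
\begin{equation}
\widetilde{D}_\alpha(\rho\|\tau)\ \geq\ \widetilde{D}_\alpha\bp{\mc{M}(\rho)\,\|\,\mc{M}(\tau)}.\notag
\end{equation}
Setting $f\coloneq\Tr[\Phi^d\tau]$ and $t\coloneq\Tr[\tau]\leq 1$, the images $\mc{M}(\rho)=F\op{1}{1}+(1-F)\op{0}{0}$ and $\mc{M}(\tau)=f\op{1}{1}+(t-f)\op{0}{0}$ are diagonal, so $\widetilde{D}_\alpha$ reduces to the classical Rényi divergence $\frac{1}{\alpha-1}\log_2\!\bp{F^{\alpha}f^{1-\alpha}+(1-F)^{\alpha}(t-f)^{1-\alpha}}$. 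Since the second summand is nonnegative and $\frac{1}{\alpha-1}>0$, dropping it yields
\begin{equation}
\widetilde{D}_\alpha\bp{\mc{M}(\rho)\,\|\,\mc{M}(\tau)}\ \geq\ \frac{1}{\alpha-1}\log_2\!\bp{F^{\alpha}f^{1-\alpha}}\ =\ \frac{\alpha}{\alpha-1}\log_2 F-\log_2 f.\notag
\end{equation}

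To finish, I will invoke the bound $f=\Tr[\Phi^d\tau]\leq 1/d$ proved in~\eqref{ghz_trace_ineq} of Lemma~\ref{multipartite_rains_lower_bound_lemma}; this depended only on $\tau\in\mc{T}(\hilb_k)$ and the swap-operator identity $T_m(\Phi^d)=\frac{1}{d}F_m$, so it applies verbatim here. Monotonicity of $\log_2$ then gives $-\log_2 f\geq\log_2 d$, and combining with the previous display produces the desired per-$\tau$ bound. Taking the infimum over $\tau\in\mc{T}(\hilb_k)$ yields $\widetilde{R}_\alpha(\rho)\geq\log_2 d+\frac{\alpha}{\alpha-1}\log_2 F$, which rearranges to the claim. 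There is no real obstacle here; the only subtlety is ensuring the classical-Rényi term being dropped is indeed nonnegative (it is, since $t-f\geq 0$ from $\mc{M}(\tau)\geq 0$) and handling the degenerate case $f=0$, where the Rényi divergence diverges to $+\infty$ and the bound is trivial.
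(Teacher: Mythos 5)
Your proposal is correct and follows essentially the same route as the paper's proof in Appendix~\ref{app:multi_sandwich_renyi_rains_lower_bound}: apply the measurement channel $\mc{M}$ with data processing for $\widetilde{D}_\alpha$, drop the nonnegative second summand of the resulting classical R\'enyi divergence, and invoke $\Tr[\Phi^d\sigma]\leq 1/d$ from~\eqref{ghz_trace_ineq} before taking the infimum over $\mc{T}(\hilb_k)$. The extra case analysis on $F\leq 1/d$ and $f=0$ is harmless but not needed, since the main chain of inequalities already covers these degenerate situations.
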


\begin{proof}
    See Appendix~\ref{app:multi_sandwich_renyi_rains_lower_bound}.
\end{proof}

\section{Upper bound on GHZ-distillable entanglement}

We now show that the GMRE is an upper bound on the one-shot GHZ-distillable entanglement and the GHZ probabilistic approximate distillable multipartite entanglement (GHZ-PADME). Given $\ep\in[0,1]$, the one-shot $\ep$-GHZ-distillable entanglement of $\rho\in\mc{S}(\hilb_k)$ is defined as
\begin{align}
    E^\ep_{\operatorname{D}}(\rho)&\coloneq\sup_{(d,\mc{L}^\leftrightarrow)}\bc{\log_2d:F(\Phi^d,\mc{L}(\rho))\geq1-\ep},
    \label{eq:def-1-shot-GHZ-distill-ent}
\end{align}
where $d\geq1$, $\mc{L}^{\leftrightarrow}\in\textrm{LOCC}(\hilb_k,\hilb'^{\otimes k})$, and $\textrm{dim}(\hilb')=d$. The one-shot $\ep$-GHZ-PADME of $\rho$ is defined as
\begin{multline}
    E^\ep_{\operatorname{PD}}(\rho)\coloneq\sup_{(d,\mc{L}^\leftrightarrow,p\in[0,1])}\bigl\{p\log_2d:\mc{L}^\leftrightarrow(\rho)=p\op{1}{1}\otimes\widetilde{\Phi}^d\\
    +(1-p)\op{0}{0}\otimes\sigma,F(\Phi^d,\widetilde{\Phi}^d)\geq1-\ep,\sigma\in\mc{S}(\hilb'^{\otimes k})\bigr\}.\nonumber
\end{multline}
Observe that $E^\ep_{\operatorname{D}}\leq E^\ep_{\operatorname{PD}}$ because we can set $p=1$ in the definition of $E^\ep_{\operatorname{PD}}$.

\begin{theorem}\label{thm:one_shot_pade_bound}
    Let $\rho\in\mc{S}(\hilb_k)$ and $\ep\in[0,1)$. Then, the following bounds hold for the one-shot  $\ep$-distillable entanglement and $\ep$-GHZ-PADME of $\rho$:
    \begin{align}
        E^\ep_{\operatorname{D}}\leq E^\ep_{\operatorname{PD}}(\rho)& \leq\frac{1}{1-\ep}\bp{R(\rho)+h_2(\ep)} ,\label{res:one_shot_pade_bound} \\
        E^\ep_{\operatorname{D}}\leq E^\ep_{\operatorname{PD}}(\rho)& \leq
        \inf_{\alpha>1} \left\{\widetilde{R}_\alpha(\rho)+\frac{\alpha}{\alpha-1}\log_2\!\left(\frac{1}{1-\ep}\right)\right\}.
        \label{res:one_shot_pade_bound-alpha-bnd}
    \end{align}
\end{theorem}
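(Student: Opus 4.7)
The plan is to derive the GHZ-PADME bounds, since $E^\ep_{\operatorname{D}}\leq E^\ep_{\operatorname{PD}}$ follows by restricting to $p=1$ in the definition, so both~\eqref{res:one_shot_pade_bound} and~\eqref{res:one_shot_pade_bound-alpha-bnd} reduce to bounding $E^\ep_{\operatorname{PD}}(\rho)$. Fix a feasible tuple $(d,\mc{L}^\leftrightarrow,p)$ in the supremum, with $F\coloneq F(\Phi^d,\widetilde{\Phi}^d)\geq 1-\ep$, and view $\mc{L}^\leftrightarrow$ as an instrument yielding the selective LOCC operation $(\mc{P}_0,\mc{P}_1)$ with $\mc{P}_1(\rho)=p\,\widetilde{\Phi}^d$ and $\mc{P}_0(\rho)=(1-p)\sigma$. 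Since selective LOCC channels form a subset of selective PPT operations (as noted in the remark following Theorem~\ref{rains_ppt_monotone_thm}), that theorem applies both to the GMRE and, for $\alpha>1$, to $\widetilde{R}_\alpha$, since $\widetilde{D}_\alpha$ obeys the direct-sum inequality~\eqref{res:direct_sum_sandwich_renyi} in that regime.

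For~\eqref{res:one_shot_pade_bound}, selective PPT monotonicity combined with the nonnegativity from Proposition~\ref{prop:multi_rains_nonnegative} gives
\begin{align}
    R(\rho)\geq p\,R(\widetilde{\Phi}^d)+(1-p)R(\sigma)\geq p\,R(\widetilde{\Phi}^d).\nonumber
\end{align}
Lemma~\ref{multipartite_rains_lower_bound_lemma} bounds $R(\widetilde{\Phi}^d)\geq D((F,1-F)\|(1/d,1-1/d))$. The crucial step is to trade $F$ for $1-\ep$ by exploiting the monotonicity of this binary relative entropy in its first argument on $[1/d,1]$ (the appendix lemma invoked within the proof of Lemma~\ref{multipartite_rains_lower_bound_lemma}); expanding and dropping the nonnegative term $\ep\log_2(d/(d-1))$ yields $D((F,1-F)\|(1/d,1-1/d))\geq (1-\ep)\log_2 d - h_2(\ep)$. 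Rearranging $R(\rho)\geq p\bp{(1-\ep)\log_2 d - h_2(\ep)}$ and using $p\leq 1$ then gives $p\log_2 d\leq (R(\rho)+h_2(\ep))/(1-\ep)$; supremizing over feasible tuples delivers~\eqref{res:one_shot_pade_bound}. The degenerate regime $F<1/d$ forces $d<1/(1-\ep)$, so $\log_2 d<-\log_2(1-\ep)\leq h_2(\ep)/(1-\ep)$ and the inequality is automatic there.

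For~\eqref{res:one_shot_pade_bound-alpha-bnd}, the same monotonicity step yields $\widetilde{R}_\alpha(\rho)\geq p\,\widetilde{R}_\alpha(\widetilde{\Phi}^d)$, and Lemma~\ref{lem:multi_sandwich_renyi_rains_lower_bound} gives $\widetilde{R}_\alpha(\widetilde{\Phi}^d)\geq \log_2 d+\tfrac{\alpha}{\alpha-1}\log_2 F$. Since $\log_2 F\leq 0$ and $p\leq 1$, we have $p\log_2 F\geq \log_2 F\geq \log_2(1-\ep)$, so
\begin{align}
    p\log_2 d\leq \widetilde{R}_\alpha(\rho)+\frac{\alpha}{\alpha-1}\log_2\!\bp{\frac{1}{1-\ep}}.\nonumber
\end{align}
Taking the infimum over $\alpha>1$ and the supremum over feasible tuples establishes~\eqref{res:one_shot_pade_bound-alpha-bnd}.

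The main subtle point will be the replacement of the fidelity-dependent lower bound by one stated purely in terms of $\ep$ in~\eqref{res:one_shot_pade_bound}: the weaker form $F\log_2 d-h_2(F)$ from Lemma~\ref{multipartite_rains_lower_bound_lemma} produces an $h_2(F)$ term that need not be dominated by $h_2(\ep)$ once $\ep>1/2$; retaining the full binary relative entropy and exploiting its monotonicity in $F$ on $[1/d,1]$ is what circumvents this obstruction.
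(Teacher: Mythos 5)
Your proof follows the paper's argument essentially step for step: selective LOCC (hence selective PPT) monotonicity via Theorem~\ref{rains_ppt_monotone_thm}, nonnegativity to drop the $(1-p)$ term, the fidelity lower bounds of Lemmas~\ref{multipartite_rains_lower_bound_lemma} and~\ref{lem:multi_sandwich_renyi_rains_lower_bound}, and monotonicity of the binary relative entropy in its first argument to replace $F$ by $1-\ep$. Your closing observation---that one must keep the full binary relative entropy rather than the weaker $F\log_2 d-h_2(F)$ form---is precisely the role of the paper's Lemma~\ref{lem:mono-rel-ent-1st}.

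One step is off: you split cases on $F\gtrless 1/d$, whereas the correct (and the paper's) split is on $1-\ep\gtrless 1/d$. In the sub-case $1-\ep<1/d\leq F$ your main branch is invoked, but the trade $D\bp{(F,1-F)\middle\|(1/d,1-1/d)}\geq D\bp{(1-\ep,\ep)\middle\|(1/d,1-1/d)}$ fails there, since $p\mapsto D\bp{(p,1-p)\middle\|(1/d,1-1/d)}$ is \emph{decreasing} for $p<1/d$ and $1-\ep$ then lies below the minimizer $1/d$. The conclusion survives because in that sub-case $d<1/(1-\ep)$, so your own ``automatic'' bound $\log_2 d\leq h_2(\ep)/(1-\ep)$ applies verbatim; the fix is simply to condition on $1-\ep$ rather than on $F$. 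A smaller omission: dropping $(1-p)\widetilde{R}_\alpha(\sigma)$ in the R\'enyi branch needs nonnegativity of $\widetilde{R}_\alpha$, which Proposition~\ref{prop:multi_rains_nonnegative} does not supply; the paper obtains it from $\Tr[\tau]\leq1$ for $\tau\in\mc{T}$ together with the corresponding property of $\widetilde{D}_\alpha$.
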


\begin{proof}
Here we prove \eqref{res:one_shot_pade_bound} and provide a proof of \eqref{res:one_shot_pade_bound-alpha-bnd} in Appendix~\ref{app:asymptotic_pade_bound}.
     Let $(d,\mc{L},p)$ define an arbitrary GHZ-PADME protocol with $d\geq1$, $\mc{L}\in\textrm{LOCC}(\hilb_m,\hilb_X\otimes\hilb'^{\otimes k})$, $\dim(\hilb')=d$, and $p\in[0,1]$. Let $F\equiv F(\Phi^d,\widetilde{\Phi}^d)$. Let us first suppose that $1-\varepsilon \leq \frac{1}{d}$. Then it follows that
    \begin{align}
    p \log_2 d & \leq p \log_2\!\left(\frac{1}{1-\varepsilon}\right)   
     \leq \frac{1}{1-\ep}\bp{R(\rho)+h_2(\ep)},
    \end{align}
    where the second inequality follows because $p \leq 1$, $-\log_2\!\left({1-\varepsilon}\right) \leq \frac{1}{1-\ep}h_2(\ep)$, and $R(\rho)\geq 0$, by Proposition~\ref{prop:multi_rains_nonnegative}.
    Now suppose that $1-\varepsilon \geq \frac{1}{d}$, so that $F \geq 1-\ep \geq 1/d$. Then
    \begin{align}
        R(\rho)&\geq pR(\widetilde{\Phi}^d)+(1-p)R(\sigma) \notag\\
        &\geq pR(\widetilde{\Phi}^d) \notag\\
        & \geq p D((F,1-F)\|(1/d,1-1/d))\notag \\
        & \geq p D((1-\ep,\ep)\|(1/d,1-1/d)) \notag\\
        &\geq p(1-\ep)\log_2d-h_2(\ep),\label{intres:one_shot_distill}
    \end{align}
    where the first inequality follows from Theorem~\ref{rains_ppt_monotone_thm}, the second from Proposition~\ref{prop:multi_rains_nonnegative}, the third from Lemma~\ref{multipartite_rains_lower_bound_lemma}, the fourth from a basic property of binary relative entropy (see Lemma~\ref{lem:mono-rel-ent-1st} in Appendix~\ref{app:monotonicity-props-binary-rel-ent}), and the monotonicity of $\log_2$ was used. We then conclude that $p\log_2d\leq\frac{1}{1-\ep}(R(\rho)+h_2(\ep))$ for all $\ep \in [0,1)$. Because the distillation protocol is arbitrary, it follows that $E^\ep_{\operatorname{PD}}(\rho)\leq\frac{1}{1-\ep}(R(\rho)+h_2(\ep))$, concluding the proof.
\end{proof}
Let us define the asymptotic GHZ-PADME and strong converse GHZ-PADME of $\rho\in\mc{S}(\hilb_k)$, respectively, as
\begin{align}
    E_{\operatorname{PD}}(\rho)&\coloneq\inf_{\ep\in(0,1]}\liminf_{n\ra\infty}\frac{1}{n}E^\ep_{\operatorname{PD}}(\rho^{\otimes n}),\label{def:pade}\\
    \widetilde{E}_{\operatorname{PD}}(\rho)&\coloneq\sup_{\ep\in(0,1]}\limsup_{n\ra\infty}\frac{1}{n}E^\ep_{\operatorname{PD}}(\rho^{\otimes n}).\label{def:strong_converse_pade}
\end{align}

\begin{theorem}
\label{thm:asymptotic_pade_bound}
    The regularized GMRE is a weak converse upper bound for the asymptotic GHZ-PADME. That is, for $\rho\in\mc{S}(\hilb_k)$,
    \begin{align}
    \label{res:asymptotic_distillable_entanglement_bound}
        E_{\operatorname{PD}}(\rho)\leq \liminf_{n\to\infty} \frac{1}{n} R(\rho^{\otimes n }).
    \end{align}
    Furthermore, we have the following regularized upper bound on the strong converse GHZ-PADME:
    \begin{equation}
        \widetilde{E}_{\operatorname{PD}}(\rho)\leq \inf_{\alpha > 1} \limsup_{n\ra\infty}\frac{1}{n}\widetilde{R}_\alpha(\rho^{\otimes n})
        \label{res:asymptotic_SC_distillable_entanglement_bound}.
    \end{equation}
\end{theorem}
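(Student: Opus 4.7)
The plan is to reduce both claims to the one-shot inequalities of Theorem~\ref{thm:one_shot_pade_bound}, applied to $\rho^{\otimes n}$, followed by normalization by $n$ and the appropriate asymptotic limit.

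For the weak converse~\eqref{res:asymptotic_distillable_entanglement_bound}, I would apply~\eqref{res:one_shot_pade_bound} to $\rho^{\otimes n}$ and divide by $n$ to obtain
\[
\frac{1}{n}E^\ep_{\operatorname{PD}}(\rho^{\otimes n}) \leq \frac{1}{1-\ep}\!\left(\frac{1}{n}R(\rho^{\otimes n}) + \frac{h_2(\ep)}{n}\right).
\]
Since $h_2(\ep)/n \to 0$ for fixed $\ep$, taking $\liminf_{n\to\infty}$ yields $\liminf_n \frac{1}{n}E^\ep_{\operatorname{PD}}(\rho^{\otimes n}) \leq \frac{1}{1-\ep}\liminf_n \frac{1}{n}R(\rho^{\otimes n})$. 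Taking an infimum over $\ep \in (0,1]$ --- equivalently, letting $\ep \to 0^+$ so that the prefactor tends to $1$ --- and invoking the definition~\eqref{def:pade} of $E_{\operatorname{PD}}$ then gives~\eqref{res:asymptotic_distillable_entanglement_bound}.

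For the strong converse~\eqref{res:asymptotic_SC_distillable_entanglement_bound}, I would first establish the deferred $\alpha$-dependent one-shot bound~\eqref{res:one_shot_pade_bound-alpha-bnd} by replaying the proof of~\eqref{res:one_shot_pade_bound} with $\widetilde{D}_\alpha$ in place of $D$. Two ingredients carry over: (i) $\widetilde{R}_\alpha$ is a selective PPT monotone for $\alpha > 1$ by Theorem~\ref{rains_ppt_monotone_thm}, since $\widetilde{D}_\alpha$ satisfies both~\eqref{res:data_processing_sandwich_renyi} and~\eqref{res:direct_sum_sandwich_renyi}; and (ii) Lemma~\ref{lem:multi_sandwich_renyi_rains_lower_bound} plays the role of Lemma~\ref{multipartite_rains_lower_bound_lemma}, giving $\widetilde{R}_\alpha(\widetilde{\Phi}^d) \geq \log_2 d + \frac{\alpha}{\alpha-1}\log_2 F$. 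Combining these with $F \geq 1-\ep$, and handling the easy regime $1-\ep \leq 1/d$ via the trivial inequality $\widetilde{R}_\alpha \geq 0$ (which follows from the scaling property and $\Tr[\tau]\leq 1$ for $\tau \in \mc{T}(\hilb_k)$), yields~\eqref{res:one_shot_pade_bound-alpha-bnd}. Applying this to $\rho^{\otimes n}$ and dividing by $n$ gives
\[
\frac{1}{n}E^\ep_{\operatorname{PD}}(\rho^{\otimes n}) \leq \frac{1}{n}\widetilde{R}_\alpha(\rho^{\otimes n}) + \frac{1}{n}\cdot\frac{\alpha}{\alpha-1}\log_2\!\left(\frac{1}{1-\ep}\right)
\]
for every $\alpha > 1$ and $\ep \in (0,1)$. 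The second term is $O(1/n)$ for fixed $\alpha$ and $\ep$, so taking $\limsup_{n\to\infty}$ kills it, leaving a bound independent of $\ep$; the sup over $\ep$ then preserves it (the $\ep = 1$ case making the one-shot bound vacuous), and an infimum over $\alpha > 1$ concludes~\eqref{res:asymptotic_SC_distillable_entanglement_bound}.

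The main obstacle is really the deferred one-shot strong-converse ingredient~\eqref{res:one_shot_pade_bound-alpha-bnd}, i.e., redoing the one-shot argument in the sandwiched R\'enyi setting; the asymptotic reductions themselves are mechanical, since the $\ep$-dependent correction terms are $O(1)$ in $n$ and are washed out by the $1/n$ normalization.
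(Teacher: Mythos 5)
Your proposal is correct and follows essentially the same route as the paper: the paper likewise derives \eqref{res:asymptotic_distillable_entanglement_bound} by applying \eqref{res:one_shot_pade_bound} to $\rho^{\otimes n}$, normalizing, and taking the infimum over $\ep$, and it proves the deferred one-shot bound \eqref{res:one_shot_pade_bound-alpha-bnd} (as Lemma~\ref{lem:renyi_rains_pade_bound}) using exactly your ingredients---selective PPT monotonicity of $\widetilde{R}_\alpha$ from Theorem~\ref{rains_ppt_monotone_thm}, nonnegativity, and Lemma~\ref{lem:multi_sandwich_renyi_rains_lower_bound}---before taking the $\limsup$, $\sup_\ep$, and $\inf_{\alpha>1}$. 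The only cosmetic difference is that your case split on $1-\ep\lessgtr 1/d$ is unnecessary for the $\alpha$-bound, since Lemma~\ref{lem:multi_sandwich_renyi_rains_lower_bound} carries no $F\geq 1/d$ hypothesis and the final step $p\,\tfrac{\alpha}{\alpha-1}\log_2(1-\ep)\geq\tfrac{\alpha}{\alpha-1}\log_2(1-\ep)$ holds for all $p\in[0,1]$.
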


\begin{proof}
    See Appendix~\ref{app:asymptotic_pade_bound}.
\end{proof}

\section{Semi-definite constraint for GMRE}
Following the approach of~\cite[Lemma~12]{fang_wang_tomamichel_duan2019}, we can write $\sum_m\norm{T_m(\widetilde{\tau}_m)}_1\leq1$ as a semi-definite constraint.
\begin{lemma}\label{lem:T_semidef_constraint}
    Suppose $\tau=\sum_m\widetilde{\tau}_m$ with $\widetilde{\tau}_m\geq0$ for all $m$. Then, $\tau\in\mc{T}$ if and only if, for each $m$, there exist $\widetilde{\tau}_m^+,\widetilde{\tau}_m^-\geq0$ such that $T_m(\widetilde{\tau}_m)=\widetilde{\tau}_m^+-\widetilde{\tau}_m^-$ and $\sum_m\Tr[\widetilde{\tau}_m^++\widetilde{\tau}_m^-]\leq1$.
\end{lemma}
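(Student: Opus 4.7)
My plan is to reduce the constraint $\sum_m \norm{T_m(\widetilde{\tau}_m)}_1 \leq 1$ defining membership in $\mc{T}(\hilb_k)$ to the semi-definite form via the Jordan decomposition of Hermitian operators. The key underlying fact I will use is the variational characterization of the trace norm for a Hermitian operator $X$:
\begin{equation}
\norm{X}_1 = \min\bc{\Tr[X^+]+\Tr[X^-] : X=X^+-X^-,\ X^+,X^-\geq0},\nonumber
\end{equation}
with the minimum attained by the orthogonal Jordan decomposition obtained from the spectral decomposition of $X$. Because each $\widetilde{\tau}_m\geq0$ is Hermitian and the partial transpose preserves Hermiticity, $T_m(\widetilde{\tau}_m)$ is Hermitian, so this characterization applies to each term in the sum.

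For the forward direction, I will assume $\tau\in\mc{T}(\hilb_k)$, so that $\sum_m \norm{T_m(\widetilde{\tau}_m)}_1\leq1$. For each $m$ I will take $\widetilde{\tau}_m^{\pm}$ to be the positive and negative parts of $T_m(\widetilde{\tau}_m)$ from its spectral decomposition; then $T_m(\widetilde{\tau}_m)=\widetilde{\tau}_m^+-\widetilde{\tau}_m^-$ with $\widetilde{\tau}_m^{\pm}\geq0$, and $\Tr[\widetilde{\tau}_m^++\widetilde{\tau}_m^-]=\norm{T_m(\widetilde{\tau}_m)}_1$. Summing over $m$ gives $\sum_m \Tr[\widetilde{\tau}_m^++\widetilde{\tau}_m^-]\leq 1$, producing the desired SDP witnesses.

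For the reverse direction, I will suppose that a collection $\{\widetilde{\tau}_m^+,\widetilde{\tau}_m^-\}$ satisfying the stated conditions exists. The variational characterization above immediately yields $\norm{T_m(\widetilde{\tau}_m)}_1\leq \Tr[\widetilde{\tau}_m^+]+\Tr[\widetilde{\tau}_m^-]$ for each $m$ (any admissible splitting upper-bounds the minimum). Summing over $m$ and applying the constraint $\sum_m \Tr[\widetilde{\tau}_m^++\widetilde{\tau}_m^-]\leq 1$ gives $\sum_m\norm{T_m(\widetilde{\tau}_m)}_1\leq 1$, confirming $\tau\in\mc{T}(\hilb_k)$ in view of the reformulation of $\mc{T}(\hilb_k)$ noted after \eqref{def:general_T_set_definition} in which the scalars $q_m$ are absorbed into the operators.

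I do not anticipate a serious obstacle; the main subtlety is simply to note that the Jordan decomposition achieves the minimum in the variational formula while any other PSD splitting only provides an upper bound on the trace norm, which is exactly what makes the equivalence hold in both directions. Once that observation is in place the argument is a one-line sum over the bipartitions $m$.
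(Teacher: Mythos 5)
Your proposal is correct and follows essentially the same route as the paper's proof: the forward direction uses the Jordan--Hahn decomposition of $T_m(\widetilde{\tau}_m)$ to achieve $\Tr[\widetilde{\tau}_m^++\widetilde{\tau}_m^-]=\norm{T_m(\widetilde{\tau}_m)}_1$, and the reverse direction's observation that any PSD splitting upper-bounds the trace norm is exactly the triangle-inequality step the paper uses. Packaging both directions as the variational characterization of the trace norm is a clean way to state it, but it is the same argument.
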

\begin{proof}
    See Appendix~\ref{app:T_semidef_constraint}.
\end{proof}
While the quantum relative entropy is nonlinear, there exist various numerical minimization techniques~\cite{koßmann_schwonnek2025,he_saunderson_fawzi2025,cvxquad} for this quantity that can be used with Lemma~\ref{lem:T_semidef_constraint} to write an SDP for the GMRE. In the ancillary files of our paper's arXiv posting, we provide MATLAB code for calculating the GMRE that uses~\cite{cvx,cvxquad,qetlab} (see also Appendix~\ref{app:multi_rains_code}).

\section{Conclusion}

In this paper, we introduced the GMRE as a measure of genuine multipartite entanglement that satisfies selective PPT monotonicity, provides an upper bound on the one-shot distillable entanglement and PADME of GHZ states, and can be computed using semi-definite programming. Future research directions include investigating how to approximate or calculate the GMRE on a quantum computer or seeing what simplifications arise in calculating it by making assumptions, such as symmetry, about the underlying states.

\bibliographystyle{IEEEtran}
\bibliography{references}

\newpage
\appendices
\numberwithin{equation}{section}

\section{Proof of Theorem~\ref{thm:general_convex_opt_theorem}}\label{app:multi_rains_convex}

\begin{proof}
Our goal is to prove that $\boldsymbol{R}(\rho)=\inf_{\tau\in\mc{T}(\hilb_{k})}\boldsymbol{D}(\rho\|\tau)$, and our approach will be similar to that of~\cite{audenaert2002}. We will first show $\boldsymbol{R}(\rho)\geq\inf_{\tau\in\mc{T}(\hilb_{k})}\boldsymbol{D}(\rho\|\tau)$ and then $\boldsymbol{R}(\rho)\leq\inf_{\tau\in\mc{T}(\hilb_{k})}\boldsymbol{D}(\rho\|\tau)$, where the set $\mathcal{T}$ is defined in~\eqref{def:general_T_set_definition}.

Let $\rho\in \mc{S}(\hilb_{k})$, and let
\begin{align}
    \mc{S}'&\coloneq\{\sigma\in\mc{L}_+(\hilb_{k}),\Tr[\sigma]\leq1\}.
\end{align}
Consider the following:
\begin{align}
    \boldsymbol{R}(\rho)&=\inf_{\substack{\sigma\in\mc{S}(\hilb_{k}) \\ \sigma=\sum_mr_m\omega_m}}\!\!\bc{\boldsymbol{D}\bp{\rho\middle\|\sigma}+\log_2\bp{\sum_mr_m\norm{T_m(\omega_m)}_1}\!} \notag\\
    &=\inf_{\substack{\sigma\in\mc{S}'(\hilb_{k}) \\ \sigma=\sum_mr_m\omega_m}}\Biggl\{\boldsymbol{D}\bp{\rho\middle\|\sigma}+\log_2\bp{\sum_mr_m\norm{T_m(\omega_m)}_1}\notag\\
    &\hspace{3cm}-\log_2(\Tr[\sigma])\Biggr\}.
\end{align}
We can extend the set of states we are minimizing over to $\mc{S}'$ because the first two terms do not depend on $\Tr[\sigma]$ (indeed they are invariant under the rescaling $\sigma \to c \sigma$ for $c>0$)  and the additional term on the second line achieves a minimal value for $\sigma \in \mc{S}'(\hilb_{k})$ when $\Tr[\sigma]=1$. Using the scaling property of $\boldsymbol{D}$, we find that
\begin{align}
    \boldsymbol{R}(\rho)&=\!\!\inf_{\substack{\sigma\in\mc{S}'(\hilb_{k}) \\ \sigma=\sum_mr_m\omega_m}}\!\!\bc{\boldsymbol{D}\bp{\rho\middle\|\sigma}-\log_2\bp{\frac{\Tr[\sigma]}{\sum_mr_m\norm{T_m(\omega_m)}_1}}\!} \notag\\
    &=\inf_{\substack{\sigma\in\mc{S}'(\hilb_{k}) \\ \sigma=\sum_mr_m\omega_m}}\bc{\boldsymbol{D}\left(\rho\middle\|\frac{\Tr[\sigma]}{\sum_mr_m\norm{T_m(\omega_m)}_1}\sigma\right)} \notag\\
    &\geq\inf_{\tau\in\mc{T}(\hilb_{k})}\bc{\boldsymbol{D}(\rho\|\tau)},\label{rains_inequality_1}
\end{align}
where the inequality follows because $\frac{\Tr[\sigma]}{\sum_mr_m\norm{T_m(\omega_m)}_1}\sigma\in\mc{T}$. To see this, consider that $\sigma\geq0$ and
\begin{align}
    \frac{\Tr[\sigma]}{\sum_mr_m\norm{T_m(\omega_m)}_1}&\geq0.
\end{align}
We also have
\begin{align}
    \sum_\ell&\frac{\Tr[\sigma]}{\sum_mr_m\norm{T_m(\omega_m)}_1}r_\ell\norm{T_\ell(\omega_\ell)}_1 \notag\\
    &=\frac{\Tr[\sigma]}{\sum_mr_m\norm{T_m(\omega_m)}_1}\sum_\ell r_\ell\norm{T_\ell(\omega_\ell)}_1 \notag\\
    &=\Tr[\sigma]\leq1.
\end{align}
Thus, $\frac{\Tr[\sigma]}{\sum_mr_m\norm{T_m(\omega_m)}_1}\sigma\in\mc{T}$, and we conclude the first desired inequality:
\begin{align}
    \boldsymbol{R}(\rho)\geq\inf_{\tau\in\mc{T}(\hilb_{k})}\boldsymbol{D}(\rho\|\tau).\label{intres:rains_convex_opt_ineq1}
\end{align}

We will now prove that $\boldsymbol{R}(\rho)\leq\inf_{\tau\in\mc{T}(\hilb_{k})}\boldsymbol{D}(\rho\|\tau)$. Let $\tau\in\mc{T}$ where $\tau=\sum_mq_m\tau_m$. Then,
\begin{align}
    \boldsymbol{D}(\rho\|\tau)&=\boldsymbol{D}(\rho\|\tau)-\log_2(\Tr[\tau])+\log_2(\Tr[\tau]) \notag\\
    &\hspace{.4cm}+\log_2\bp{\frac{1}{\Tr[\tau]}\sum_mq_m\norm{T_m(\tau_m)}_1} \notag\\
    &\hspace{.4cm}-\log_2\bp{\frac{1}{\Tr[\tau]}\sum_mq_m\norm{T_m(\tau_m)}_1} \notag\\
    &=\boldsymbol{D}\bp{\rho\middle\|\frac{\tau}{\Tr[\tau]}}+\log_2\bp{\frac{1}{\Tr[\tau]}\sum_mq_m\norm{T_m(\tau_m)}_1} \notag\\
    &\hspace{.4cm}-\log_2\bp{\sum_mq_m\norm{T_m(\tau_m)}_1} \notag \\
    & \geq
    \boldsymbol{D}\bp{\rho\middle\|\frac{\tau}{\Tr[\tau]}}+\log_2\bp{\frac{1}{\Tr[\tau]}\sum_mq_m\norm{T_m(\tau_m)}_1\!}\!,
    \label{rains_inequality_2_pt1}
\end{align}
where~\eqref{rains_inequality_2_pt1} follows because $\tau\in\mc{T}$ by assumption. We thus conclude that
\begin{align}
     \boldsymbol{D}(\rho\|\tau)&\geq \boldsymbol{D}\bp{\rho\middle\|\frac{\tau}{\Tr[\tau]}}+\log_2\bp{\frac{1}{\Tr[\tau]}\sum_mq_m\norm{T_m(\tau_m)}_1\!}\!.
     \label{rains_inequality_2_pt1-1}
\end{align}
Let $\tau_m'\coloneqq \frac{\tau_m}{\Tr[\tau_m]}$ and $q_m'\coloneqq q_m\Tr[\tau_m]$. Then, $\sum_mq_m'\tau_m'=\sum_mq_m\tau_m$ and $\sum_mq_m'=\Tr[\tau]$. We can then write
\begin{align}
    &\boldsymbol{D}\bp{\rho\middle\|\frac{\tau}{\Tr[\tau]}}+\log_2\bp{\frac{1}{\Tr[\tau]}\sum_mq_m\norm{T_m(\tau_m)}_1} \notag\\
    &=\boldsymbol{D}\bp{\rho\middle\|\frac{\tau}{\Tr[\tau]}}+\log_2\bp{\frac{1}{\Tr[\tau]}\sum_mq_m'\norm{T_m(\tau_m')}_1} \notag\\
    &\geq \boldsymbol{R}(\rho).\label{rains_inequality_2_pt2}
\end{align}
Note that $\bp{\frac{q_m'}{\Tr[\tau]}}_m$ is a probability distribution because $q_m'\geq0$, $\Tr[\tau]\geq0$, and $\frac{1}{\Tr[\tau]}\sum_mq_m'=1$, using the definition of $q_m'$. Because $\tau$ is arbitrary, we can take an infimum over $\tau$ in~\eqref{rains_inequality_2_pt1-1} and use~\eqref{rains_inequality_2_pt2} to conclude that $ \boldsymbol{R}(\rho)\leq\inf_{\tau\in\mc{T}(\hilb_{k})}\boldsymbol{D}(\rho\|\tau)$. With this and~\eqref{intres:rains_convex_opt_ineq1}, we conclude $\boldsymbol{R}(\rho)=\inf_{\tau\in\mc{T}(\hilb_{k})}\boldsymbol{D}(\rho\|\tau)$.
\end{proof}

\section{Comparison of GMRE and genuine multipartite log-negativity in the transverse field Ising model}
\label{app:multi_rains_and_gmn}

\begin{figure}[h]
\centerline{\includegraphics[width=9cm]{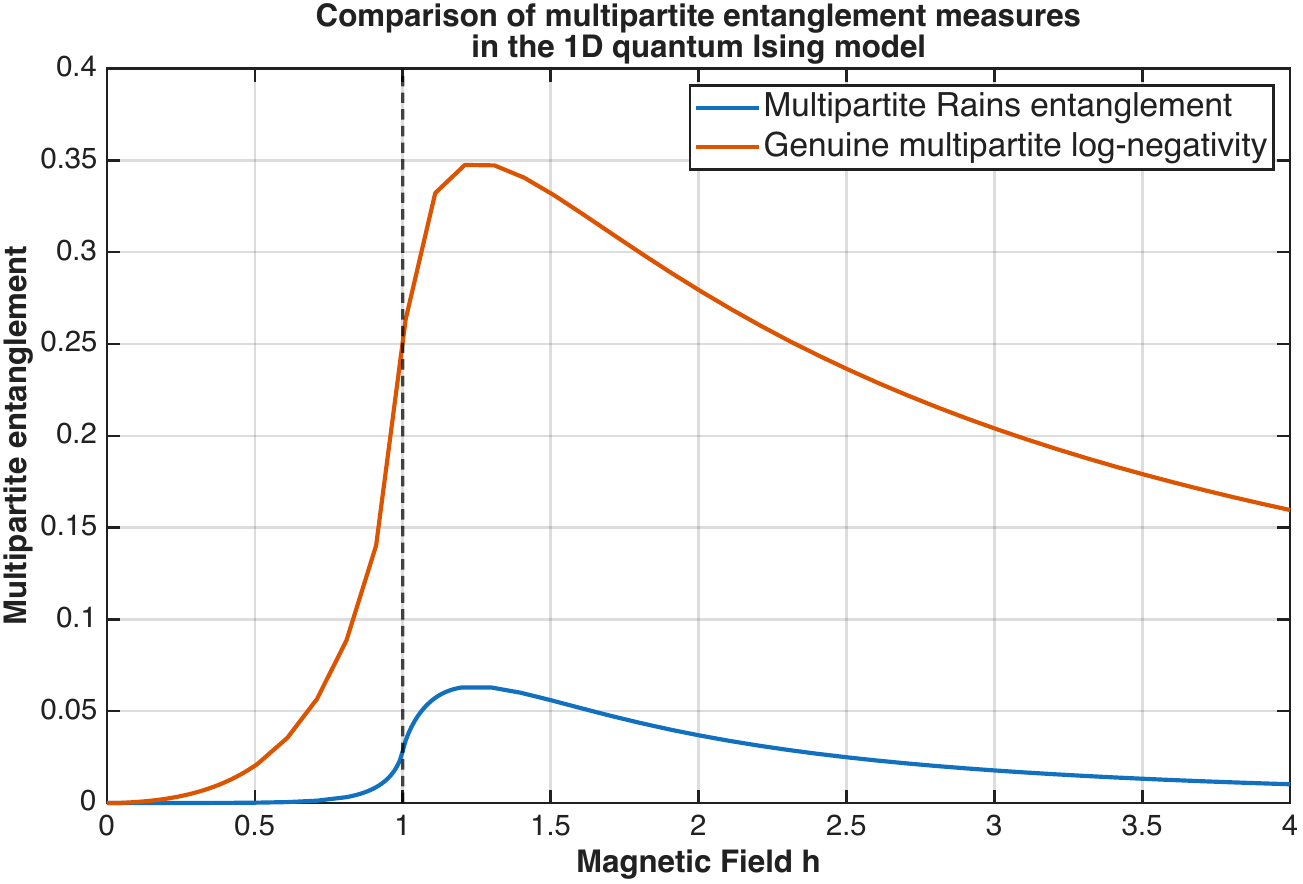}}
\caption{The GMRE and genuine multipartite log-negativity for tripartite reduced density matrices in the 1D transverse field Ising model are plotted versus the magnetic field $h$. 
The reduced density matrices are obtained exactly via the Jordan-Wigner transformation~\cite{Fagotti2013RDM},
and the code for calculating the entanglement uses~\cite{cvx,cvxquad,qetlab}. We used the relationship $E_N=\log_2(2N+1)$ to calculate the log-negativity $E_N$, where~\cite[Eq.~(8)]{hofmann2014} was used for calculating the genuine multipartite negativity $N$.}
\label{fig:multipartite_measures_vs_h}
\end{figure}

This appendix presents an application of the genuine multipartite Rains entanglement in a standard condensed-matter setting and uses it to contrast GMRE with the genuine multipartite log-negativity (log-GMN) for the same family of states. We consider the one-dimensional transverse field Ising model (TFIM),
\begin{equation}
H \;=\; -\sum_{i} (\sigma_i^{x}\sigma_{i+1}^{x} + h\sigma_i^{z}),
\label{eq:TFIM_Hamiltonian}
\end{equation}
where $\sigma_i^\alpha$ is a Pauli operator acting on site $i$, and $h$ is the transverse field strength (the Ising coupling is set to unity). The model undergoes a quantum phase transition at $h=1$, where quantum fluctuations are strongest and correlations become scale invariant.
We focus on tripartite reduced density matrices extracted from the TFIM ground state, obtained exactly via the Jordan--Wigner transformation~\cite{Fagotti2013RDM}.

Figure~\ref{fig:multipartite_measures_vs_h} compares the GMRE and log-GMN as a function of $h$ for the tripartite reduced density matrix of three adjacent spins. Both measures are enhanced in the vicinity of the critical point and peak near $h\approx 1$, consistent with previous literature~\cite{wang_etal2025,Lyu2025Ising}.
At the same time, the two quantities are clearly separated across the full field range: The GMRE is uniformly smaller and exhibits a markedly different field dependence. In particular, compared with log-GMN, the multipartite Rains entanglement decays much faster at small field strength, becoming vanishingly small already around $h\approx 0.5$, while log-GMN remains much larger.

This suggests a practical utility of the GMRE in condensed-matter applications: While negativity-based diagnostics and related GME measures studied previously in the one-dimensional Ising model show qualitatively similar trends versus $h$, the GMRE appears substantially more selective in the low-field regime. In this sense, the GMRE has the potential to be a sharper indicator of quantum criticality.

\section{GMRE as the limit of the genuine multipartite sandwiched R\'enyi--Rains entanglement}

\label{app:sandwiched-RR-to-GRME}

\begin{proposition}
\label{prop:multi_rains_limit_sandwiched_renyi_rains}
    The GMRE is the limit of the multipartite R\'enyi--Rains entanglement for $\alpha\ra1$. That is, for all $\rho\in\mc{S}(\hilb_k)$, the multipartite sandwiched R\'enyi--Rains entanglement satisfies the following equality:
    \begin{align}\label{res:multi_rains_limit_sandwiched_renyi_rains}
        \lim_{\alpha\ra1}\widetilde{R}_\alpha(\rho)&=R(\rho).
    \end{align}
\end{proposition}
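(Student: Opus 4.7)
The plan is to leverage the convex-optimization form from Theorem~\ref{thm:general_convex_opt_theorem}, namely $\widetilde{R}_\alpha(\rho)=\inf_{\tau\in\mc{T}(\hilb_k)}\widetilde{D}_\alpha(\rho\|\tau)$ and $R(\rho)=\inf_{\tau\in\mc{T}(\hilb_k)}D(\rho\|\tau)$, together with the pointwise limit~\eqref{res:limit_sandwich_renyi_quantum_rel_entropy} and the standard monotonicity of $\alpha\mapsto\widetilde{D}_\alpha(\rho\|\sigma)$, to show $\limsup_{\alpha\ra 1}\widetilde{R}_\alpha(\rho)\leq R(\rho)\leq\liminf_{\alpha\ra 1}\widetilde{R}_\alpha(\rho)$.

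The upper bound is routine. For any $\tau\in\mc{T}(\hilb_k)$, the definition of the infimum gives $\widetilde{R}_\alpha(\rho)\leq\widetilde{D}_\alpha(\rho\|\tau)$, and the right-hand side converges to $D(\rho\|\tau)$ as $\alpha\ra 1$ by~\eqref{res:limit_sandwich_renyi_quantum_rel_entropy}. Therefore $\limsup_{\alpha\ra 1}\widetilde{R}_\alpha(\rho)\leq D(\rho\|\tau)$ for every $\tau$, and infimizing over $\mc{T}(\hilb_k)$ yields $\limsup_{\alpha\ra 1}\widetilde{R}_\alpha(\rho)\leq R(\rho)$.

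For the lower bound when $\alpha\ra 1^+$, the standard monotonicity $\widetilde{D}_\alpha\geq\widetilde{D}_{\alpha'}$ pointwise for $\alpha\geq\alpha'$ gives $\widetilde{D}_\alpha\geq D$ for $\alpha>1$; taking infima over $\mc{T}(\hilb_k)$ preserves the inequality, so $\widetilde{R}_\alpha(\rho)\geq R(\rho)$ for all $\alpha>1$ and hence $\liminf_{\alpha\ra 1^+}\widetilde{R}_\alpha(\rho)\geq R(\rho)$.

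The main obstacle is the lower bound for $\alpha\ra 1^-$, since monotonicity then points the wrong way ($\widetilde{R}_\alpha\leq R$ in this regime). I would handle this via compactness. First observe that every $\tau\in\mc{T}(\hilb_k)$ satisfies $\Tr[\tau]\leq 1$: each $\tau_m\geq 0$ gives $\|T_m(\tau_m)\|_1\geq|\Tr[T_m(\tau_m)]|=\Tr[\tau_m]$, whence $\Tr[\tau]=\sum_m q_m\Tr[\tau_m]\leq\sum_m q_m\|T_m(\tau_m)\|_1\leq 1$. Hence $\mc{T}(\hilb_k)$ is closed and bounded, and thus compact in finite dimension. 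Pick any sequence $\alpha_n\nearrow 1$ and near-minimizers $\tau_n\in\mc{T}(\hilb_k)$ with $\widetilde{D}_{\alpha_n}(\rho\|\tau_n)\leq\widetilde{R}_{\alpha_n}(\rho)+1/n$, and pass to a subsequence with $\tau_n\ra\tau^*\in\mc{T}(\hilb_k)$. For any fixed $\beta\in[1/2,1)$ and all $n$ large enough that $\alpha_n\geq\beta$, monotonicity gives $\widetilde{D}_{\alpha_n}(\rho\|\tau_n)\geq\widetilde{D}_\beta(\rho\|\tau_n)$, and lower semicontinuity of $\widetilde{D}_\beta(\rho\|\cdot)$ on its PSD second argument yields $\liminf_n\widetilde{D}_{\alpha_n}(\rho\|\tau_n)\geq\widetilde{D}_\beta(\rho\|\tau^*)$. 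Sending $\beta\ra 1^-$ and invoking~\eqref{res:limit_sandwich_renyi_quantum_rel_entropy} once more, the right-hand side tends to $D(\rho\|\tau^*)\geq R(\rho)$, so $\liminf_{\alpha\ra 1^-}\widetilde{R}_\alpha(\rho)\geq R(\rho)$. Combining the three one-sided bounds proves the claimed equality.
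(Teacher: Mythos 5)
Your proof is correct and follows essentially the same route as the paper's: both pass to the convex-optimization form of Theorem~\ref{thm:general_convex_opt_theorem}, dispose of the $\alpha\ra1^+$ side using monotonicity of $\alpha\mapsto\widetilde{D}_\alpha$, and reduce the $\alpha\ra1^-$ side to exchanging $\inf_{\tau\in\mc{T}(\hilb_k)}$ with $\sup_{\alpha\in[1/2,1)}$. The only difference is that you carry out that exchange by hand (near-minimizers, sequential compactness of $\mc{T}(\hilb_k)$, and lower semicontinuity of $\widetilde{D}_\beta(\rho\|\cdot)$), whereas the paper invokes the Mosonyi--Hiai minimax theorem under exactly those same hypotheses.
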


\begin{proof}
    We will first show that this limit holds from the right and then from the left. Consider the following:
    \begin{align}\label{res:limit_multi_renyi_rains_right}
    R(\rho)&=\inf_{\tau\in\mc{T}(\hilb_k)}D(\rho\|\tau) \notag\\
    &=\inf_{\tau\in\mc{T}(\hilb_k)}\lim_{\alpha\ra1^+}\widetilde{D}_\alpha(\rho\|\tau) \notag\\
    &=\inf_{\tau\in\mc{T}(\hilb_k)}\inf_{\alpha\in(1,\infty)}\widetilde{D}_\alpha(\rho\|\tau) \notag\\
    &=\inf_{\alpha\in(1,\infty)}\inf_{\tau\in\mc{T}(\hilb_k)}\widetilde{D}_\alpha(\rho\|\tau) \notag\\
    &=\lim_{\alpha\ra1^+}\widetilde{R}_\alpha(\rho),
\end{align}
where~\eqref{res:limit_sandwich_renyi_quantum_rel_entropy} was used on the second line, and we used the fact that $\widetilde{D}_\alpha$ is monotonically increasing in $\alpha$ for $\alpha\in(0,1)\cup(1,\infty)$~\cite[Prop.~7.31]{khatri2024} to replace $\lim_{\alpha\ra1^+}$ with $\inf_{\alpha\in(1,\infty)}$ on the third line. Now, we consider the other limit:
\begin{align}\label{res:limit_multi_renyi_rains_left}
    R(\rho)&=\inf_{\tau\in\mc{T}(\hilb_k)}D(\rho\|\tau) \notag\\
    &=\inf_{\tau\in\mc{T}(\hilb_k)}\lim_{\alpha\ra1^-}\widetilde{D}_\alpha(\rho\|\tau) \notag\\
    &=\inf_{\tau\in\mc{T}(\hilb_k)}\sup_{\alpha\in[1/2,1)}\widetilde{D}_\alpha(\rho\|\tau) \notag\\
     &=\sup_{\alpha\in[1/2,1)}\inf_{\tau\in\mc{T}(\hilb_k)}\widetilde{D}_\alpha(\rho\|\tau) \notag\\
      &=\lim_{\alpha\ra1^-}\widetilde{R}_\alpha(\rho).
\end{align}
Here, we used~\eqref{res:limit_sandwich_renyi_quantum_rel_entropy} on the second line. We also employed the Mosonyi--Hiai minimax theorem (presented in~\cite[Corollary A.2]{mosonyi_hiai2011}; see also~\cite[Theorem 2.26]{khatri2024}), which, along with the fact that $\mc{T}(\hilb_k)$ is compact, is applicable due to the function $(\tau,\alpha)\ra\widetilde{D}_\alpha(\rho\|\tau)$ being lower semi-continuous in $\tau$ and monotonically increasing in $\alpha$ for $\alpha\in(0,1)\cup(1,\infty)$. So, with~\eqref{res:limit_multi_renyi_rains_right} and~\eqref{res:limit_multi_renyi_rains_left}, we conclude $\lim_{\alpha\ra1}\widetilde{R}_\alpha(\rho)=R(\rho)$.
\end{proof}

\section{Proof of Proposition~\ref{prop:multi_rains_nonnegative}}
\label{app:nonnegativity}

\begin{proof}
    We will show that $R$ is nonnegative by first showing $\sigma\in\mc{T}\Rightarrow\Tr[\sigma]\leq1$ and then applying Klein's inequality, which states that if $\Tr[\sigma]\leq1$, then $D(\rho\|\sigma)\geq0$ for $\rho\in\mc{S}(\hilb)$. Let $\sigma=\sum_mq_m\sigma_m$ be an arbitrary decomposition of $\sigma$. Then,
    \begin{align}
        \Tr[\sigma]&=\sum_mq_m\Tr[\sigma_m] \notag\\
        &=\sum_mq_m\Tr[T_m(\sigma_m)] \notag\\
        &\leq\sum_mq_m\norm{T_m(\sigma_m)}_1\leq1\label{trace_element_t_leq_1}
    \end{align}
where the last inequality follows because $\sigma\in\mc{T}$. With Klein's inequality, we conclude that $D(\rho\|\sigma)\geq0$ for all $\rho\in\mc{S}(\hilb)$ and $\sigma\in\mc{T}$, implying $R(\rho)\geq0$, which completes the proof of nonnegativity.

To see that the GMRE is equal to zero for all biseparable states, note that all biseparable states are PPT mixtures with unit trace, implying that they are elements of $\mc{T}$. Thus, if $\rho$ is biseparable, $\tau=\rho$ attains $R(\rho)=0$, which is the minimum value of the GMRE since it is nonnegative.
\end{proof}

\section{Monotonicity of binary relative entropy}
\label{app:monotonicity-props-binary-rel-ent}

\begin{lemma}[Monotonicity of binary relative entropy in the first argument]\label{lem:mono-rel-ent-1st}
Let $r_0,r_1>0$ and define, for $p\in(0,1)$,
\begin{align}
f(p)
& \coloneqq  D\!\left((p,1-p)\,\middle\|\,(r_0,r_1)\right)\notag \\
& = p\log_2\bp{\frac{p}{r_0}} + (1-p)\log_2\bp{\frac{1-p}{r_1}}.
\end{align}
Set
\begin{equation}
p^\ast \coloneqq  \frac{r_0}{r_0+r_1}.    
\end{equation}
Then $f$ is strictly decreasing on $(0,p^\ast)$ and strictly increasing on $(p^\ast,1)$. In particular, $f$ attains its unique minimum at $p=p^\ast$.
\end{lemma}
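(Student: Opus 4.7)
The plan is to prove the claim by direct differentiation, treating it as an elementary calculus exercise. First, I would note that $f$ is $C^\infty$ on the open interval $(0,1)$, since each of its constituent pieces $p\log_2 p$, $(1-p)\log_2(1-p)$, $-p\log_2 r_0$, and $-(1-p)\log_2 r_1$ is smooth there. Differentiating, the $\frac{1}{\ln 2}$ contributions arising from the product rule on $p\log_2 p$ and $(1-p)\log_2(1-p)$ cancel one another, and the constant linear pieces combine cleanly to yield
\begin{equation}
f'(p) \;=\; \log_2\!\left(\frac{p\, r_1}{(1-p)\, r_0}\right).
\end{equation}

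Next I would analyze the sign of $f'$ by inspection. The argument $\frac{p\, r_1}{(1-p)\, r_0}$ is a ratio of a strictly increasing positive numerator and a strictly decreasing positive denominator on $(0,1)$, hence is itself strictly increasing on this interval, ranging over $(0,\infty)$. It equals $1$ precisely when $p\, r_1 = (1-p)\, r_0$, i.e., when $p = \frac{r_0}{r_0+r_1} = p^\ast$. Since $\log_2$ is strictly increasing with $\log_2 1 = 0$, one concludes $f'(p) < 0$ on $(0,p^\ast)$ and $f'(p) > 0$ on $(p^\ast,1)$. Strict monotonicity of $f$ on each of these subintervals then follows from the standard first-derivative test, and $p^\ast$ is automatically the unique minimizer of $f$ on $(0,1)$.

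There is no real obstacle in this argument; it is pure elementary calculus. The only point deserving a brief mental check is the cancellation of the $\frac{1}{\ln 2}$ terms in the derivative, which is what produces the particularly clean closed form for $f'(p)$ and makes the critical point readable by inspection. If desired, one could additionally record the value $f(p^\ast) = -\log_2(r_0+r_1)$ as a byproduct, though this is not required by the statement.
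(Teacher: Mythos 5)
Your proof is correct and follows essentially the same route as the paper's: compute $f'(p)=\log_2\!\left(\frac{p r_1}{(1-p) r_0}\right)$, locate the unique zero at $p^\ast=\frac{r_0}{r_0+r_1}$, and read off the sign of $f'$ on either side (the paper additionally records $f''>0$ for strict convexity, which your first-derivative argument renders unnecessary). The side remark $f(p^\ast)=-\log_2(r_0+r_1)$ is also correct.
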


\begin{proof}
The derivative of $f$ is
\begin{equation}
f'(p)
= \log_2\bp{\frac{p}{r_0}} - \log_2\bp{\frac{1-p}{r_1}}
= \log_2\!\left(\frac{p r_1}{(1-p) r_0}\right).
\end{equation}
Hence $f'(p)=0$ if and only if
\begin{equation}
\frac{p}{1-p} = \frac{r_0}{r_1},
\end{equation}
which yields the unique critical point $p^\ast = \frac{r_0}{r_0+r_1}$.

For $p<p^\ast$ we have $\frac{p}{1-p}<\frac{r_0}{r_1}$ and therefore $f'(p)<0$, while for
$p>p^\ast$ we have $f'(p)>0$. Consequently, $f$ is strictly decreasing on $(0,p^\ast)$ and strictly increasing on $(p^\ast,1)$.

Moreover,
\begin{equation}
f''(p)=\frac{1}{\ln2}\bp{\frac{1}{p}+\frac{1}{1-p}}>0 \quad \text{for all } p\in(0,1),
\end{equation}
so $f$ is strictly convex, and the critical point $p^\ast$ is the unique global minimum.
\end{proof}

\begin{lemma}[Monotonicity of binary relative entropy in the second argument]
\label{lem:mono-rel-ent-2nd}
Let $p\in(0,1)$, $s>0$, and $r\in(0,s)$. Define
\begin{align}
F(r)
& \coloneqq  D\bigl((p,1-p)\,\|\,(r,s-r)\bigr) \notag \\
& = p\log_2\bp{\frac{p}{r}} + (1-p)\log_2\bp{\frac{1-p}{s-r}}
 .
\end{align}
Then $F$ is strictly decreasing on $(0,ps)$, strictly increasing on $(ps,s)$ and attains a unique global minimum at $r=ps$.
\end{lemma}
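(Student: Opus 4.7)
The plan is to mirror the proof of Lemma~\ref{lem:mono-rel-ent-1st} almost verbatim, the only difference being that we now differentiate with respect to the second-argument parameter $r$ (which appears in both components, since the second argument is $(r,s-r)$). The structure is: compute $F'$, identify the unique critical point, read off monotonicity from the sign of $F'$, and confirm uniqueness of the minimizer via strict convexity from $F''>0$.

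First, I would compute
\begin{equation}
F'(r) = -\frac{p}{r\ln 2} + \frac{1-p}{(s-r)\ln 2} = \frac{1}{\ln 2}\cdot\frac{(1-p)r - p(s-r)}{r(s-r)},
\end{equation}
valid for $r\in(0,s)$. The numerator $(1-p)r - p(s-r) = r - ps$ vanishes exactly at $r = ps$, giving the unique critical point in $(0,s)$ (note that $ps\in(0,s)$ since $p\in(0,1)$ and $s>0$). For $r<ps$ the numerator is negative, so $F'(r)<0$; for $r>ps$ it is positive, so $F'(r)>0$. This gives strict monotonicity on each subinterval.

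Next, I would verify strict convexity by computing
\begin{equation}
F''(r) = \frac{1}{\ln 2}\bp{\frac{p}{r^2} + \frac{1-p}{(s-r)^2}} > 0
\end{equation}
for all $r\in(0,s)$, since $p\in(0,1)$. Strict convexity plus the existence of an interior critical point then forces $r=ps$ to be the unique global minimizer of $F$ on $(0,s)$, completing the proof.

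I do not expect any serious obstacle here; the only minor care needed is to observe that $ps\in(0,s)$ so that the critical point actually lies in the open domain, and to confirm the sign of the numerator $r-ps$ on each side. Everything else is a direct transcription of the argument used for Lemma~\ref{lem:mono-rel-ent-1st}.
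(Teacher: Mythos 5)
Your proof is correct and follows essentially the same route as the paper: compute $F'(r)=\frac{1}{\ln 2}\frac{r-ps}{r(s-r)}$, read off the sign from the numerator, and conclude. The second-derivative/convexity check is a harmless addition the paper omits for this lemma.
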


\begin{proof}
We differentiate $F$ with respect to $r$. Writing
\begin{equation}
F(r)
= p(\log_2 p - \log_2 r) + (1-p)\bigl(\log_2(1-p) - \log_2(s-r)\bigr),
\end{equation}
we obtain
\begin{equation}
\frac{d}{dr}F(r)
= \frac{1}{\ln2}\bp{-\frac{p}{r} + \frac{1-p}{s-r}}.
\end{equation}
Combining the terms over a common denominator yields
\begin{align}
\frac{d}{dr}F(r)
& = \frac{1}{\ln2}\bp{\frac{-p(s-r) + (1-p)r}{r(s-r)}} \notag \\
& = \frac{1}{\ln2}\frac{r-ps}{r(s-r)}.
\end{align}
Since $r>0$ and $s-r>0$ for all $r\in(0,s)$, and $\ln2>0$, the denominator is strictly positive. Hence the sign of $\frac{d}{dr}F(r)$ is determined by the numerator $r-ps$. It follows that $\frac{d}{dr}F(r)<0$ for $r<ps$, $\frac{d}{dr}F(r)=0$ at $r=ps$, and $\frac{d}{dr}F(r)>0$ for $r>ps$.

Therefore, $F$ is strictly decreasing on $(0,ps)$, strictly increasing on $(ps,s)$, and has a unique global minimum at $r=ps$.
\end{proof}

We now provide more detail for how Lemmas~\ref{lem:mono-rel-ent-1st} and \ref{lem:mono-rel-ent-2nd} are used in the proofs of Lemma~\ref{multipartite_rains_lower_bound_lemma} and Theorem~\ref{thm:one_shot_pade_bound}. Let us begin by showing how Lemma~\ref{lem:mono-rel-ent-2nd} justifies the third inequality in \eqref{ghz_relative_entropy_ineq}. If we let $p\ra F$, $r\ra\Tr[\Phi^d\sigma]$, and $s\ra1$, then it follows from Lemma~\ref{lem:mono-rel-ent-2nd} that the function
\begin{align}
    f(\Tr[\Phi^d\sigma])\coloneq D\bp{(F,1-F)\middle\|\,\bp{\Tr[\Phi^d\sigma],1-\Tr\bb{\Phi^d\sigma}}}
\end{align}
is strictly decreasing on $(0,F)$. Now, we know $\Tr[\Phi^d\sigma]\leq1/d\leq F$, using \eqref{ghz_trace_ineq} along with the assumption made in Lemma~\ref{multipartite_rains_lower_bound_lemma}, that $F\geq1/d$. With this, we arrive at
\begin{multline}
    D\bp{(F,1-F)\middle\|(\Tr[\Phi^d\sigma],1-\Tr\bb{\Phi^d\sigma})}\\
    \geq D\bp{(F,1-F)\middle\|(1/d,1-1/d)}.
\end{multline}

We now turn our focus to  Theorem~\ref{thm:one_shot_pade_bound}, where we show how we arrive at the fourth inequality in \eqref{intres:one_shot_distill} using Lemma~\ref{lem:mono-rel-ent-1st}. Let $p\ra F$, $r_0\ra1/d$, and $r_1\ra1-1/d$. It then follows from Lemma~\ref{lem:mono-rel-ent-1st} that the function
\begin{align}
    f(F)\coloneq D\bp{(F,1-F)\middle\|(1/d,1-1/d)}
\end{align}
is strictly increasing on $(1/d,1)$. Under the assumption before \eqref{intres:one_shot_distill} that $1-\ep\geq1/d$, we have that $F\geq1-\ep\geq1/d$. Thus,
\begin{multline}
    D\bp{(F,1-F)\middle\|(1/d,1-1/d)}\\
    \geq D\bp{(1-\ep,\ep)\middle\|(1/d,1-1/d)},
\end{multline}
which justifies the fourth inequality in \eqref{intres:one_shot_distill}.

\section{Proof of Lemma~\ref{lem:multi_rains_ghz}}
\label{app:multi_rains_ghz}

\begin{proof}
    We will prove this statement by first showing $R(\Phi^d)\geq\log_2 d$ and then showing $R(\Phi^d)\leq\log_2 d$.

    It is straightforward to prove the first inequality using Lemma~\ref{multipartite_rains_lower_bound_lemma}:
    \begin{align}
        R(\Phi^d)&\geq F(\Phi^d,\Phi^d)\log_2d-h_2(F(\Phi^d,\Phi^d)) \notag\\
        &=\log_2d-h_2(1)=\log_2d\label{ghz_entropy_ineq_1},
    \end{align}
    where we used the fact that $\lim_{x\ra0^+}x\log_2x=0$.

    Now, for the second inequality, let us first consider the completely dephased GHZ state $\overline{\Phi}^d\in\mc{S}(\hilb^{\otimes n})$, with $\dim(\hilb)=d$, given by
    \begin{align}
        \overline{\Phi}^d&=\frac{1}{d}\sum_{i=0}^{d-1}\op{i}{i}^{\otimes n}
    \end{align}
    where $\{\ket{i}\}_{i=0}^{d-1}$ is the computational basis of $\mc{H}$. We will now show that $\overline{\Phi}^d\in\mc{T}$. Because $\op{i}{i}^{\otimes n}\geq0$ for all $i\in\{0,1,...,d-1\}$, we need only show that $\overline{\Phi}^d$ satisfies the final condition stipulated in \eqref{def:general_T_set_definition}. Consider the following:
    \begin{align}
        \frac{1}{d}\sum_{i=0}^{d-1}\norm{T_m\bp{\op{i}{i}^{\otimes n}}}_1&=\frac{1}{d}\sum_{i=0}^{d-1}\norm{\op{i}{i}^{\otimes n}}_1 \notag\\
        &=\frac{1}{d}\sum_{i=0}^{d-1}\Tr\bb{\op{i}{i}^{\otimes n}} \notag\\
        &=\Tr\bb{\overline{\Phi}^d} \notag\\
        &=1.
    \end{align}
    Thus, $\overline{\Phi}^d\in\mc{T}$. Now, let us consider $D(\Phi^d\|\overline{\Phi}^d)$:
    \begin{align}
        D(\Phi^d\|\overline{\Phi}^d)&=-H(\Phi^d)-\Tr\bb{\Phi^d\log_2\overline{\Phi}^d} \notag\\
        &=-\Tr\bb{\Phi^d\sum_{i=0}^{d-1}\log_2\bp{\frac{1}{d}}\op{i}{i}^{\otimes n}} \notag\\
        &=\log_2(d)\Tr\bb{\Phi^d\sum_{i=0}^{d-1}\op{i}{i}^{\otimes n}} \notag\\
        &=\log_2(d)\Tr\bb{\frac{1}{d}\sum_{k,j=0}^{d-1}\op{k}{j}^{\otimes n}\sum_{i=0}^{d-1}\op{i}{i}^{\otimes n}} \notag\\
        &=\log_2(d)\frac{1}{d}\sum_{i=0}^{d-1}\bra{i}^{\otimes n}\ket{i}^{\otimes n} \notag\\
        &=\log_2d,
    \end{align}
    where $H(\rho)\coloneq-\Tr[\rho\log_2\rho]$ is the quantum entropy of $\rho$, and we used the fact that the quantum entropy of any pure state is zero. Taking an infimum over elements of $\mc{T}$, we see that $R(\Phi^d)\leq\log_2d$, which we can use with~\eqref{ghz_entropy_ineq_1} to conclude $R(\Phi^d)=\log_2d$, completing the proof.
\end{proof}

\section{Proof of Lemma~\ref{lem:multi_sandwich_renyi_rains_lower_bound}}
\label{app:multi_sandwich_renyi_rains_lower_bound}

\begin{proof}
    Let $\mc{M}$ be a measurement channel as defined in~\eqref{def:measurement_channel_m}, and let $\sigma\in\mc{T}(\hilb_k)$. Then,
    \begin{align}
        \widetilde{D}_\alpha(\rho\|\sigma)&\geq\widetilde{D}_\alpha(\mc{M}(\rho)\|\mc{M}(\sigma)) \notag\\
        &=\widetilde{D}_\alpha\bp{(F,1-F)\middle\|\,\bp{\Tr\bb{\Phi^d\sigma},\Tr[\sigma]-\Tr\bb{\Phi^d\sigma}}} \notag\\
        &=\frac{1}{\alpha-1}\log_2\biggl(F^\alpha\bp{\Tr[\Phi^d\sigma]}^{1-\alpha} \notag\\
        &\qquad +(1-F)^\alpha\bp{\Tr\bb{\sigma}-\Tr\bb{\Phi^d\sigma}}^{1-\alpha}\biggr) \notag\\
        &\geq\frac{1}{\alpha-1}\log_2\bp{F^\alpha\bp{\Tr[\Phi^d\sigma]}^{1-\alpha}} \notag\\
        &\geq\frac{1}{\alpha-1}\log_2\bp{F^\alpha d^{\alpha-1}} \notag\\
        &=\frac{\alpha}{\alpha-1}\log_2 F+\log_2 d
    \end{align}
    where we used~\eqref{res:data_processing_sandwich_renyi}, the monotonicity of $\log_2$, and~\eqref{ghz_trace_ineq}, which states that $\Tr\bb{\Phi^d\sigma}\leq\frac{1}{d}$. Since $\sigma$ is an arbitrary element of $\mc{T}$, we conclude that
    \begin{align}
        \widetilde{R}_\alpha(\rho)-\frac{\alpha}{\alpha-1}\log_2(F)&\geq \log_2d,
    \end{align}
    completing the proof.
\end{proof}

\section{Proof of Theorem~\ref{thm:asymptotic_pade_bound}}
\label{app:asymptotic_pade_bound}

Before we prove Theorem~\ref{thm:asymptotic_pade_bound}, we will prove that the multipartite R\'enyi--Rains entanglement also provides an upper bound on the GHZ-PADE. Let us restate \eqref{res:one_shot_pade_bound-alpha-bnd} as the following lemma:

\begin{lemma}\label{lem:renyi_rains_pade_bound}
    Let $\rho\in\mc{S}(\hilb_k)$, $\ep\in[0,1)$, and $\alpha>1$. Then, the following bound holds for the one-shot $\ep$-distillable entanglement of $\rho$:
    \begin{align}\label{res:one_shot_ghz_distillable_entanglement_sandwich_bound}
        E^\ep_{\operatorname{PD}}(\rho)\leq\widetilde{R}_\alpha(\rho)+\frac{\alpha}{\alpha-1}\log_2\!\left(\frac{1}{1-\ep}\right).
    \end{align}
\end{lemma}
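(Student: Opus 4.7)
The plan is to replicate the proof of \eqref{res:one_shot_pade_bound} from Theorem~\ref{thm:one_shot_pade_bound}, with $R$ replaced by $\widetilde R_\alpha$ and Lemma~\ref{multipartite_rains_lower_bound_lemma} replaced by Lemma~\ref{lem:multi_sandwich_renyi_rains_lower_bound}. First, I would fix an arbitrary $\ep$-GHZ-PADME protocol $(d,\mc{L}^\leftrightarrow,p)$ with
\[\mc{L}^\leftrightarrow(\rho)=p\op{1}{1}\otimes\widetilde\Phi^d+(1-p)\op{0}{0}\otimes\sigma,\]
$F\equiv F(\Phi^d,\widetilde\Phi^d)\geq 1-\ep$, and consider the selective operation $(\mc{P}_x)_{x\in\{0,1\}}$ obtained by composing $\mc{L}^\leftrightarrow$ with projective measurement of the flag register $X$. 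This is a selective LOCC operation and hence, by the remark following Theorem~\ref{rains_ppt_monotone_thm}, a selective PPT operation.

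Next, I would verify that $\widetilde R_\alpha$ satisfies the hypotheses of Theorem~\ref{rains_ppt_monotone_thm}: for $\alpha>1$, $\widetilde D_\alpha$ obeys both data processing~\eqref{res:data_processing_sandwich_renyi} and the direct-sum inequality~\eqref{res:direct_sum_sandwich_renyi}, so $\widetilde R_\alpha$ is a selective PPT monotone. Applied to $(\mc{P}_x)$ above, this yields
\[\widetilde R_\alpha(\rho)\geq p\,\widetilde R_\alpha(\widetilde\Phi^d)+(1-p)\,\widetilde R_\alpha(\sigma).\]
The $(1-p)\widetilde R_\alpha(\sigma)$ term can be dropped using nonnegativity of $\widetilde R_\alpha$, which follows by the same argument as Proposition~\ref{prop:multi_rains_nonnegative}: any $\tau\in\mc{T}(\hilb_k)$ obeys $\Tr[\tau]\leq 1$ by~\eqref{trace_element_t_leq_1}, and the standard Klein-type bound $\widetilde D_\alpha(\rho\|\tau)\geq -\log_2\Tr[\tau]$ for sandwiched R\'enyi divergences then gives $\widetilde R_\alpha\geq 0$.

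Finally, applying Lemma~\ref{lem:multi_sandwich_renyi_rains_lower_bound} to $\widetilde\Phi^d$ produces
\[\widetilde R_\alpha(\rho)\geq p\log_2 d+p\cdot\frac{\alpha}{\alpha-1}\log_2 F.\]
Since $F\geq 1-\ep$, $\log_2(1-\ep)\leq 0$, and $p\in[0,1]$, the quantity $p\cdot\frac{\alpha}{\alpha-1}\log_2F$ is lower-bounded by $\frac{\alpha}{\alpha-1}\log_2(1-\ep)$, yielding $\widetilde R_\alpha(\rho)\geq p\log_2d+\frac{\alpha}{\alpha-1}\log_2(1-\ep)$. Rearranging and taking the supremum over protocols gives \eqref{res:one_shot_ghz_distillable_entanglement_sandwich_bound}.

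The argument is essentially bookkeeping; the only substantive checks are that $\widetilde R_\alpha$ inherits selective PPT monotonicity from Theorem~\ref{rains_ppt_monotone_thm} and nonnegativity from the Proposition~\ref{prop:multi_rains_nonnegative}-style argument. Unlike the proof of~\eqref{res:one_shot_pade_bound}, no case split on whether $1-\ep<1/d$ is required, because Lemma~\ref{lem:multi_sandwich_renyi_rains_lower_bound} has no hypothesis on the fidelity $F$.
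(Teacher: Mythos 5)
Your proposal is correct and follows essentially the same route as the paper's proof in Appendix~\ref{app:asymptotic_pade_bound}: selective PPT monotonicity of $\widetilde{R}_\alpha$ via Theorem~\ref{rains_ppt_monotone_thm}, dropping the $(1-p)\widetilde{R}_\alpha(\sigma)$ term by nonnegativity (the paper cites the same $\Tr[\tau]\leq1$ fact from~\eqref{trace_element_t_leq_1} together with nonnegativity of $\widetilde{D}_\alpha$ under a trace condition), then Lemma~\ref{lem:multi_sandwich_renyi_rains_lower_bound} and $F\geq1-\ep$. Your closing observation that no case split on $1-\ep$ versus $1/d$ is needed is also consistent with the paper's argument.
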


\begin{proof}
    Let $(d,\mc{L},p)$ define an arbitrary probabilistic approximate distillation protocol with $d\geq1$, $\mc{L}\in\textrm{LOCC}(\hilb_m,\hilb_X\otimes\hilb'^{\otimes k})$, $\dim(\hilb')=d$, and $p\in[0,1]$. For $\alpha>1$,
    \begin{align}
        \widetilde{R}_\alpha(\rho)&\geq p\widetilde{R}_\alpha(\widetilde{\Phi}^d)+(1-p)\widetilde{R}_\alpha(\sigma) \notag\\
        &\geq p\widetilde{R}_\alpha(\widetilde{\Phi}^d) \notag\\
        &\geq p\bp{\log_2d+\frac{\alpha}{\alpha-1}\log_2 F} \notag\\
        &\geq p\bp{\log_2d+\frac{\alpha}{\alpha-1}\log_2(1-\ep)} \notag\\
        &\geq p\log_2d+\frac{\alpha}{\alpha-1}\log_2(1-\ep).\label{intres:sandwich_renyi_rains_pade_bound}
    \end{align}
    We used the selective PPT monotonicity of the multipartite sandwiched R\'enyi--Rains entanglement for the first inequality, which is a consequence of Theorem~\ref{rains_ppt_monotone_thm}. For the second inequality, we used~\eqref{trace_element_t_leq_1} with the fact that the sandwiched R\'enyi relative entropy is nonnegative if the first argument has unit trace and the second has trace at most one (see \cite[Proposition~7.35]{khatri2024}). Also, we used Lemma~\ref{lem:multi_sandwich_renyi_rains_lower_bound} for the third inequality. Rearranging~\eqref{intres:sandwich_renyi_rains_pade_bound} gives
    \begin{align}
        p\log_2d&\leq\widetilde{R}_\alpha(\rho)-\frac{\alpha}{\alpha-1}\log_2(1-\ep).
    \end{align}
    Because the distillation protocol is arbitrary, it follows that
    \begin{align}
        E^\ep_{\operatorname{PD}}(\rho)&\leq\widetilde{R}_\alpha(\rho)-\frac{\alpha}{\alpha-1}\log_2(1-\ep),
    \end{align}
    concluding the proof.
\end{proof}

We will now  prove Theorem~\ref{thm:asymptotic_pade_bound}.

\begin{proof}[Proof of Theorem~\ref{thm:asymptotic_pade_bound}]
Using \eqref{res:one_shot_pade_bound}, we find that
\begin{align}
    \inf_{\ep\in(0,1]}&\liminf_{n\ra\infty}\frac{1}{n}E^\ep_{\operatorname{PD}}(\rho^{\otimes n}) \notag\\
        &\leq\inf_{\ep\in(0,1]}\liminf_{n\ra\infty}\frac{1}{n(1-\ep)}\bp{{R}(\rho^{\otimes n})+h_2(\ep)} \notag\\
        & =\inf_{\ep\in(0,1]}\liminf_{n\ra\infty}\frac{1}{n(1-\ep)}{R}(\rho^{\otimes n})\notag\\
        & =\inf_{\ep\in(0,1]} \frac{1}{1-\ep}\liminf_{n\ra\infty}\frac{1}{n}{R}(\rho^{\otimes n}) \notag\\
        & =\liminf_{n\ra\infty}\frac{1}{n}{R}(\rho^{\otimes n})
        .
    \end{align}
thus concluding the proof of \eqref{res:asymptotic_distillable_entanglement_bound}.

    Using Lemma~\ref{lem:renyi_rains_pade_bound}, we find that, for all $\alpha>1$,
    \begin{align}
        \sup_{\ep\in(0,1]}&\limsup_{n\ra\infty}\frac{1}{n}E^\ep_{\operatorname{PD}}(\rho^{\otimes n}) \notag\\
        &\leq\sup_{\ep\in(0,1]}\limsup_{n\ra\infty}\frac{1}{n}\bp{\widetilde{R}_\alpha(\rho^{\otimes n})-\frac{\alpha}{\alpha-1}\log_2(1-\ep)} \notag\\
        &=\sup_{\ep\in(0,1]}\limsup_{n\ra\infty}\frac{1}{n}\widetilde{R}_\alpha(\rho^{\otimes n}) \notag\\
        &=\limsup_{n\ra\infty}\frac{1}{n}\widetilde{R}_\alpha(\rho^{\otimes n}) .
    \end{align}
    Because this holds for all $\alpha> 1$, we thus conclude \eqref{res:asymptotic_SC_distillable_entanglement_bound} after taking the infimum over $\alpha > 1$.
\end{proof}

\section{Proof of Lemma~\ref{lem:T_semidef_constraint}}
\label{app:T_semidef_constraint}

\begin{proof}
    For $\tau\in\mc{L}_+(\hilb_k)$ with $\tau=\sum_m\widetilde{\tau}_m$ and each $\widetilde{\tau}_m\geq0$, our goal is to show $\tau\in\mc{T}(\hilb_k)$ if and only if there exist $\widetilde{\tau}_m^+,\widetilde{\tau}_m^-\geq0$ such that $T_m(\widetilde{\tau}_m)=\widetilde{\tau}_m^+-\widetilde{\tau}_m^-$ for each $m$ and $\sum_m\Tr[T_m(\widetilde{\tau}_m)]\leq1$.
    
    ($\Rightarrow$) Suppose $\tau\in\mc{T}$. Each $T_m(\widetilde{\tau}_m)$ has a Jordan--Hahn decomposition given by
    \begin{align}
        T_m(\widetilde{\tau}_m)&=\widetilde{\tau}_m^+-\widetilde{\tau}_m^-,
    \end{align}
    where $\widetilde{\tau}_m^+,\widetilde{\tau}_m^-\geq0$ and $\widetilde{\tau}_m^+$ and $\widetilde{\tau}_m^-$ are supported on orthogonal subspaces. Then, 
    \begin{align}
        \sum_m\Tr[\widetilde{\tau}_m^++\widetilde{\tau}_m^-]&= \sum_m\norm{T_m(\widetilde{\tau}_m)}_1 \notag\\
        &\leq1.
    \end{align}

    ($\Leftarrow$) Suppose there exist $\widetilde{\tau}_m^+,\widetilde{\tau}_m^-\geq0$ such that $T_m(\widetilde{\tau}_m)=\widetilde{\tau}_m^+-\widetilde{\tau}_m^-$ and $\sum_m\Tr[\widetilde{\tau}_m^++\widetilde{\tau}_m^-]\leq1$. Then, using the triangle inequality, 
    \begin{align}
        \sum_m\norm{T_m(\widetilde{\tau}_m)}_1&=\sum_m\norm{\widetilde{\tau}_m^+-\widetilde{\tau}_m^-}_1 \notag\\
        &\leq\sum_m\norm{\widetilde{\tau}_m^+}_1+\norm{\widetilde{\tau}_m^-}_1 \notag\\
        &=\sum_m \Tr[\widetilde{\tau}_m^++\widetilde{\tau}_m^-] \notag\\
        &\leq1,
    \end{align}
    concluding the proof.
\end{proof}

\section{Code for GMRE}
\label{app:multi_rains_code}

Here, we include a snippet of code from the MATLAB file attached in the ancillary files of our paper's arXiv posting. This code snippet shows how we calculate the GMRE, but it does not contain the error handling and comments included in the MATLAB file. It should be noted that our code is not written in terms of semi-definite constraints due to our usage of CVXQUAD~\cite{cvxquad}. We also use~\cite{cvx} and~\cite{qetlab}.
\begin{lstlisting}[language=Matlab]
function R = multi_rains_entanglement(rho, dims)
    rho_dims = size(rho);
    rho_row = rho_dims(1);
    n = length(dims);
    sys_vec = 1:n;
    m = n-1;
    num_partitions = 2^m-1;

    cvx_begin sdp quiet
        variable X(rho_row,rho_row,num_partitions) complex semidefinite
        
        sdp_var = 0;
        sdp_con = 0;
        for i = 1:num_partitions
            sdp_var = sdp_var + X(:,:,i);
        end
        count = 0;
        for i = 1:m
            num = nchoosek(m, i);
            sys_combos = nchoosek(sys_vec, i);
            for j = 1:num
                sdp_con = sdp_con + TraceNorm(PartialTranspose(X(:,:,j+count),sys_combos(j,1:end),dims));
            end
            count = count+num;
        end

        minimize quantum_rel_entr(rho, sdp_var)
        subject to
            sdp_con<=1
        cvx_end   
    R = cvx_optval/log(2)
    disp(sdp_var)
end
\end{lstlisting}

\section{Alternate definition of multipartite Rains entanglement}

In this appendix, we define an alternate multipartite Rains entanglement measure, which we plan to explore further in future work.

\begin{definition}
Let $\rho$ be a $k$-partite quantum state. Define
\begin{align}
\boldsymbol{R}(\rho) \coloneqq  \min_{\sigma \in \mathcal{T}'} \boldsymbol{D}(\rho \| \sigma),
\end{align}
where
\begin{align}
\mathcal{T}' 
\coloneqq  \Bigl\{ \sigma \ge 0  :  \left\| T_X(\sigma) \right\|_1 \le 1 \ \ \forall X \in \mathcal{B} \Bigr\}.
\end{align}
Here $\mathcal{B}$ denotes the set of all bipartitions of the $k$ parties, and $\boldsymbol{D}(\rho \| \sigma)$ is a generalized divergence.
\end{definition}

This quantity is an upper bound on the GMRE.
It is not clear whether this quantity is a selective PPT monotone, which represents a potential advantage of the GMRE over this quantity. However, it is a single-letter, strong converse upper bound on the GHZ distillable entanglement $E^\ep_{\operatorname{D}}(\rho)$ defined in \eqref{eq:def-1-shot-GHZ-distill-ent}, because, in addition to other properties, it obeys a subadditivity inequality, in contrast to the GMRE. It is also efficiently computable by semi-definite programming. Detailed proofs will be provided in forthcoming work. 

\end{document}